\newcounter{xx}
\newcounter{yy}
\newlength{\rad}
\newtheorem{theorem}{Theorem}[section] 
\newtheorem{lemma}[theorem]{Lemma}
\newtheorem{claim}[theorem]{Claim}
\theoremstyle{remark} }
\theoremstyle{definition} 
  \newtheorem{definition}[theorem]{Definition}}
\newcommand{\bg}[1]{\medskip\noindent{\it #1}}
\newenvironment{proofof}[1]{\begin{proof}[Proof of {#1}]}{\end{proof}}
\newcommand{\group}[1]{\mathbb{Z}_{#1}}
\newcommand{\grp}{\group{2}}
\newcommand{\mc}{\mathcal}
\newcommand{\gap}{\enspace}
\newcommand{\gaptwo}{\gap \gap}
\newcommand{\gapfive}{\gaptwo \gaptwo \gap}
\newcommand{\gapten}{\gapfive \gapfive}
\newcommand{\T}{\ensuremath{\mc T}}
\newcommand{\hT}{\ensuremath{\widehat \T}}
\newcommand{\hG}{\ensuremath{\widehat G}}
\newcommand{\Tu}{\ensuremath{\T_{uu}}}
\newcommand{\Tuv}{\ensuremath{\T_{uv}}}
\newcommand{\Tv}{\ensuremath{\T_{vv}}}
\newcommand{\Pc}{\ensuremath{\mc P}}
\newcommand{\Zc}{\ensuremath{\mc Z}}
\newcommand{\C}{\ensuremath{\mc C}}
\newcommand{\K}{\ensuremath{\mc K}}
\newcommand{\Sc}{\ensuremath{\mc S}}
\newcommand{\ceil}[1]{\ensuremath{\left\lceil#1\right\rceil}}
\newcommand{\floor}[1]{\ensuremath{\left\lfloor#1\right\rfloor}}
\newcommand{\gm}{\ensuremath{\gamma}}
\newcommand{\ld}{\ensuremath{\lambda}}
\newcommand{\dt}{\ensuremath{\delta}}
\newcommand{\Gm}{\ensuremath{\Gamma}}
\newcommand{\sg}{\ensuremath{\sigma}}
\newcommand{\Sg}{\ensuremath{\Sigma}}
\newcommand{\uv}{\ensuremath{\{u,v\}}}
\newcommand{\sm}{\ensuremath{\setminus}}
\newcommand{\es}{\ensuremath{\emptyset}}
\newcommand{\sse}{\subseteq}
\newcommand{\bQ}{\ensuremath{\overline{Q}}}
\newcommand{\tA}{\ensuremath{\widetilde A}}
\newcommand{\tH}{\ensuremath{\widetilde H}}
\newcommand{\tgm}{\ensuremath{\widetilde\gamma}}
\DeclareMathOperator{\comp}{comp}
\title{Min-Max Theorems for Packing and Covering Odd $(u,v)$-trails%
\thanks{A preliminary version~\cite{IbrahimpurS17} appeared in the Proceedings of the
  Internaitonal Conference on Integer Programming and Combinatorial Optimization (IPCO),
  2017.}} 
\author{
         Sharat Ibrahimpur\thanks{{\tt \{sharat.ibrahimpur, cswamy\}@uwaterloo.ca}. 
         Dept. of Combinatorics and Optimization, Univ. Waterloo, Waterloo, ON N2L 3G1.
         Supported in part by NSERC grant 327620-09 and an NSERC Discovery Accelerator
         Supplement award.}   
\and
\addtocounter{footnote}{-1}
         Chaitanya Swamy\footnotemark
}
\date{}
\begin{document}

\maketitle

\begin{abstract}
We investigate the problem of packing and covering odd $(u,v)$-trails in a
graph. A $(u,v)$-trail is a $(u,v)$-walk that is allowed to have repeated vertices but no  
repeated edges. We call a trail \emph{odd} if the number of edges in the trail is
odd. 
Let $\nu(u,v)$ denote the maximum number of edge-disjoint odd $(u,v)$-trails, 
and $\tau(u,v)$ denote the minimum size of an edge-set that intersects every odd
$(u,v)$-trail. 

We prove that $\tau(u,v)\leq 2\nu(u,v)+1$. 
Our result is {\em tight}---there are examples showing that $\tau(u,v)=2\nu(u,v)+1$---%
and substantially improves upon the bound of $8$ obtained in~\cite{ChurchleyMW16} for
$\tau(u,v)/\nu(u,v)$. 
Our proof also yields a polynomial-time algorithm for finding a cover and a collection of 
trails satisfying the above bounds.  

Our proof is simple and has two main ingredients. We show that 
(loosely speaking) the problem can be reduced to the problem of packing and covering odd
$(\uv,\uv)$-trails losing a factor of 2 (either in the number of trails found, or the size
of the cover). Complementing this, we show that the odd-$(\uv,\uv)$-trail packing and
covering problems can be tackled by exploiting a powerful min-max result
of~\cite{ChudnovskyGGGLS06} for packing vertex-disjoint nonzero $A$-paths in group-labeled
graphs.  
\end{abstract}

\section{Introduction}
Min-max theorems are a classical and central theme in combinatorics and combinatorial
optimization, with many such results arising from the study of packing and covering
problems. For instance, {\em Menger's theorem}~\cite{Menger27} gives a {\em tight} min-max   
relationship for packing and covering edge-disjoint (or internally vertex-disjoint)
$(u,v)$-paths: the maximum number of edge-disjoint (or internally vertex-disjoint)
$(u,v)$-paths (i.e., {\em packing number}) is equal to the minimum number of edges 
(or vertices) needed to cover all $u$-$v$ paths (i.e., {\em covering number}); the
celebrated {\em max-flow min-cut theorem} generalizes this result to arbitrary
edge-capacitated graphs.  
Another well-known example is the {\em Lucchesi-Younger theorem}~\cite{LucchesiY78}, which
shows that the maximum number of edge-disjoint directed cuts equals the minimum-size of an
arc-set that intersects every directed cut. 

Motivated by Menger's theorem, it is natural to ask whether similar (tight or approximate)
min-max theorems hold for other variants 
of path-packing and path-covering problems. Questions of this flavor have attracted 
a great deal of attention. Perhaps the most prominent results known of this type are
Mader's min-max theorems for packing vertex-disjoint
$\Sc$-paths~\cite{Mader78,Schrijver01},   
which generalize both the {\em Tutte-Berge formula} and Menger's theorem, and further 
far-reaching generalizations of this due to Chudnovsky et al.~\cite{ChudnovskyGGGLS06} and
Pap~\cite{Pap07} regarding packing vertex-disjoint non-zero paths and non-returning
$A$-paths in group-labeled graphs and permutation-labeled graphs respectively.
 
We consider a different variant of the $(u,v)$-path 
packing and covering problems, wherein we impose {\em parity constraints} on the
paths. Such constraints naturally arise in the study of multicommodity-flow problem, which
can be phrased in terms of packing odd circuits in a signed graph, and consequently, such
odd-circuit packing and covering problems have been widely investigated (see,
e.g.,~\cite{Schrijver03}, Chapter~75).   
Focusing on $(u,v)$-paths, a natural variant that arises involves packing and
covering {\em odd} $(u,v)$-paths, where a $(u,v)$-path is odd if it contains an odd number
of edges. 
However, there are simple examples~\cite{ChurchleyMW16} showing an unbounded gap between
the packing and covering numbers in this setting. 

In light of this, following~\cite{ChurchleyMW16}, we investigate the min-max relationship
for packing and covering odd $(u,v)$-trails. An {\em odd $(u,v)$-trail} is a $(u,v)$-walk
with {\em no repeated edges} and an odd number of edges. 
Churchley et al.~\cite{ChurchleyMW16} seem to have been the first to consider this
problem. They showed 
that the (worst-case) ratio between the covering and packing numbers for odd
$(u,v)$-trails is at most $8$---which is in stark contrast with the setting of odd $(u,v)$
paths, where the ratio is unbounded---and at least $2$, 
so there is no tight min-max theorem like Menger's theorem. 
They motivate the study of odd $(u,v)$-trails 
from the perspective of studying {\em totally-odd immersions}. In particular, determining
if a graph $G$ has $k$ edge-disjoint odd $(u,v)$-trails is equivalent to deciding if the
2-vertex graph with $k$ parallel edges has a totally-odd immersion into $G$. 

\vspace{-1ex}
\paragraph{Our results.}
We prove a {\em tight bound} on the ratio of the covering and packing numbers for
odd $(u,v)$-trails, which also substantially improves the bound of
$8$ shown in~\cite{ChurchleyMW16} for this covering-vs-packing ratio.%
\footnote{This bound was later improved to $5$~\cite{Churchley16,Ibrahimpur16}.
We build upon some of the ideas in~\cite{Ibrahimpur16}.} 
Let $\nu(u,v)$ and $\tau(u,v)$ denote respectively the packing and covering numbers for
odd $(u,v)$-trails. 
Our main result (Theorem~\ref{thm:2k}) establishes that $\tau(u,v)\leq 2\nu(u,v)+1$.
Furthermore, we obtain in polynomial time a certificate establishing that $\tau(u,v)\leq
2\nu(u,v)+1$. This is because we show that, for any integer $k\geq 0$, we 
can compute in polynomial time, a collection of $k$ edge-disjoint odd
$(u,v)$-trails, or an odd-$(u,v)$-trail cover of size at most $2k-1$.
As mentioned earlier, there are examples showing $\tau(u,v)=2\nu(u,v)+1$ (see
Fig.~\ref{2gap}), so our result {\em settles} the question of obtaining worst-case
bounds for the $\tau(u,v)/\nu(u,v)$ ratio.  

Notably, our proof is also simple, and noticeably simpler than (and different from) the
one in~\cite{ChurchleyMW16}. 
We remark that the proof in~\cite{ChurchleyMW16} constructs covers of a
certain form; in Appendix~\ref{append-lbound}, we prove a {\em lower bound} showing that
such covers cannot yield a bound better than $3$ on the covering-vs-packing ratio. 

\vspace{-1ex}
\paragraph{Our techniques.}
We focus on showing that for any $k$, we can obtain either $k$
edge-disjoint odd $(u,v)$-trails or a cover of size at most $2k-1$.
This follows from two other auxiliary results which are potentially of
independent interest. 

Our key insight is that one can {\em decouple} the requirements of
parity and $u$-$v$ connectivity when constructing odd $(u,v)$-trails. More precisely, we
show that if we have a collection of $k$ edge-disjoint odd $(\uv,\uv)$-trails, that is,
odd trails that start and end at a vertex of $\uv$, and the $u$-$v$ edge
connectivity, denoted $\ld(u,v)$, is at least $2k$, then we can obtain $k$ edge-disjoint
odd $(u,v)$-trails (Theorem~\ref{thm:contacts}). Notice that if $\ld(u,v)<2k$, then a min
$u$-$v$ cut yields a cover of the desired size. So the upshot of
Theorem~\ref{thm:contacts} is that it reduces our task to the {\em relaxed} problem of
finding $k$ edge-disjoint odd $(\uv,\uv)$-trails. 
The proof of Theorem~\ref{thm:contacts}
relies on elementary arguments (see Section~\ref{sec:contacts}). We show that given a
fixed collection of $2k$ edge-disjoint $(u,v)$-paths, we can always modify our collection
of edge-disjoint trails so as to make progress by decreasing the number of contacts that
the paths make with the trails and/or by increasing the number of odd $(u,v)$ trails in
the collection. Repeating this process a small number of times thus yields the $k$
edge-disjoint odd $(u,v)$-trails.  

Complementing Theorem~\ref{thm:contacts} we prove that we can either obtain $k$
edge-disjoint $(\uv,\uv)$-trails, or find an odd-$(\uv,\uv)$-trail cover (which is also an   
odd-$(u,v)$-trail cover) of size at most $2k-2$ (Theorem~\ref{thm:sstrails}). This proof
relies on a powerful result of~\cite{ChudnovskyGGGLS06} about {\em packing and covering nonzero
$A$-paths in group-labeled graphs} (see Section~\ref{sec:sstrails}, which defines these
concepts precisely).  
The idea here is that~\cite{ChudnovskyGGGLS06} show that one can obtain either $k$ vertex-disjoint
nonzero $A$-paths or a set of at most $2k-2$ vertices intersecting all nonzero
$A$-paths, and this can be done in polytime~\cite{ChudnovskyCG08,Geelen16} (see
also~\cite{Pap08}).  
This is the same type of result that we seek, except that 
we care about edge-disjoint trails, as opposed to vertex-disjoint paths. However, by
moving to a suitable gadget graph (essentially the line graph) where we replace each
vertex by a clique, we can encode trails as paths, and edge-disjointness is captured by
vertex-disjointness. Applying the result in~\cite{ChudnovskyGGGLS06} then yields
Theorem~\ref{thm:sstrails}.  

\vspace{-1ex}
\paragraph{Related work.}
Churchley et al.~\cite{ChurchleyMW16} initiated the study of min-max 
theorems for packing and covering odd $(u,v)$-trails. They cite the question of
totally-odd immersions as motivation for their work. We say that a graph $H$ has an
{\em immersion}~\cite{RobertsonS10} into another graph $G$, if one can map $V_H$
bijectively to some $U\sse V(G)$, and $E_H$ to edge-disjoint trails connecting the
corresponding vertices in $U$.  
(As noted by~\cite{ChurchleyMW16}, trails are more natural objects than paths in 
the context of reversing an edge-splitting-off operation, as this, in general, creates
trails.)  
An immersion is called {\em totally odd} if all trails are of odd length. 
As noted earlier, the question of deciding if a graph $G$ has $k$ edge-disjoint odd
$(u,v)$-trails can be restated as determining if the 2-vertex graph with $k$ parallel
edges has a totally-odd immersion into $G$.  

In an interesting contrast to the unbounded gap between the covering and packing numbers
for odd $(u,v)$-paths, \cite{SchrijverS94} 
showed that the covering number is at most twice the {\em fractional packing number}
(which is the optimal value of the natural odd-$(u,v)$-path-packing LP).

The notions of odd paths and trails can be generalized and abstracted in two ways. The
first involves {\em signed graphs}~\cite{Zaslavsky98}, and there are various results on packing
odd {\em circuits} in signed graphs, which are closely related to multicommodity flows
(see \cite{Schrijver03}, Chapter~75). The second involves 
{\em group-labeled graphs}, for which~\cite{ChudnovskyGGGLS06,ChudnovskyCG08} present
strong min-max theorems and algorithms for packing and covering vertex-disjoint nonzero
$A$-paths. 
Pap further generalized the latter results to the setting of packing vertex-disjoint
non-returning $A$-paths in {\em permutation-labeled graphs}. He obtained both a min-max
theorem for the packing problem~\cite{Pap07} (which is analogous to the min-max theorem
in~\cite{ChudnovskyGGGLS06}), and devised an algorithm for computing a maximum-cardinality
packing~\cite{Pap08}.

\section{Preliminaries and notation} \label{sec:prelims}
Let $G = (V,E)$ be an undirected graph. 
For $X\sse V$, we use $E(X)$ to denote the set of edges having both endpoints in $X$ and
$\delta(X)$ to denote set of edges with exactly one endpoint in $X$. For disjoint 
$X,Y\sse V$, we use $E(X,Y)$ to denote the set of edges with one end in $X$ and one end in  
$Y$. 

A {\em $(p,q)$-walk} is a sequence $(x_0,e_1,x_1,e_2,x_2,\dots,e_r,x_r)$, where
$x_0,\ldots,x_r\in V$ with $x_0=p$, $x_r=q$, and $e_i$ is an edge with ends $x_{i-1}$,
$x_i$ for all $i=1,\ldots,r$. The vertices $x_1,\ldots,x_{r-1}$
are called the {\em internal vertices} of this walk. We say that such a $(p,q)$-walk is a:  
\begin{itemize}[nosep]
\item {\em $(p,q)$-path}, if either $r>0$ and all the $x_i$s 
are distinct (so $p\neq q$), or $r=0$, 
which we call a {\em trivial path}; 
\item {\em $(p,q)$-trail} if all the $e_i$s are distinct (we could have $p=q$).
\end{itemize}
Thus, a $(p,q)$-trail is a $(p,q)$-walk that is allowed to have repeated vertices but  
{\em no repeated edges}. 
Given vertex-sets $A,B \subseteq V$, we say that a trail is an
$(A,B)$-trail to denote that 
it is a $(p,q)$-trail for some $p \in A, q \in B$. 
A $(p,q)$-trail is called {\em odd} (respectively, {\em even}) if
it has an odd (respectively, even) number of edges.

\begin{definition}
Let $G = (V,E)$ be a graph, and $u,v\in V$ (we could have $u=v$). 
\begin{enumerate}[(a), nosep]
\item The \emph{packing number for odd $(u,v)$-trails}, denoted $\nu(u,v;G)$, is the
maximum number of edge-disjoint odd $(u,v)$-trails in $G$. 
\item We call a subset of edges $C$ an {\em odd $(u,v)$-trail cover} of $G$ if it
intersects every odd $(u,v)$-trail in $G$. The 
{\em covering number for odd $(u,v)$-trails},
denoted $\tau(u,v;G)$, is the minimum size of an odd $(u,v)$-trail cover of $G$. 
\end{enumerate}
We drop the argument $G$ when it is clear from the context.
\end{definition}

For any two distinct vertices $x,y$ of $G$, we denote the size of a minimum
$(x,y)$-cut in $G$ by $\lambda(x,y;G)$, and drop $G$ when it is clear from the context. 
By the max-flow min-cut (or Menger's) theorem, $\ld(x,y;G)$ is also the maximum number of 
edge-disjoint $(x,y)$-paths in $G$.

\section{Main results and proof overview} \label{sec:main}

Our main result is the following {\em tight} approximate min-max theorem relating
the packing and covering numbers for odd $(u,v)$ trails. 

\begin{theorem} \label{thm:2k}
Let $G=(V,E)$ be an undirected graph, and $u, v\in V$. 
For any nonnegative integer $k$, we can obtain in polynomial time, either:
\begin{enumerate}[topsep=0.5ex, itemsep=0.5ex]
\item $k$ edge-disjoint odd $(u,v)$-trails in $G$, or
\item an odd $(u,v)$-trail cover of $G$ of size at most $2k-1$.
\end{enumerate}
Hence, we have $\tau(u,v;G) \leq 2 \cdot \nu(u,v;G) + 1$.
\end{theorem}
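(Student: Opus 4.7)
The plan is to combine the two auxiliary results advertised in the ``Our techniques'' discussion, namely Theorem~\ref{thm:sstrails} (which handles packing/covering odd $(\uv,\uv)$-trails via the group-labeled $A$-paths machinery of~\cite{ChudnovskyGGGLS06}), and Theorem~\ref{thm:contacts} (which converts a packing of odd $(\uv,\uv)$-trails into a packing of odd $(u,v)$-trails whenever $\ld(u,v)$ is large enough). The trivial cases $k=0$ (return the empty collection of trails) and $u=v$ (odd $(u,v)$-trails are exactly odd $(\uv,\uv)$-trails, so Theorem~\ref{thm:sstrails} already suffices) can be dispatched at the start, so assume $k\geq 1$ and $u\neq v$.

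First I would invoke Theorem~\ref{thm:sstrails} with the integer $k$. This either outputs $k$ edge-disjoint odd $(\uv,\uv)$-trails in $G$, or an edge-set $C$ of size at most $2k-2$ that intersects every odd $(\uv,\uv)$-trail. In the latter case we are done immediately: any odd $(u,v)$-trail is in particular an odd $(\uv,\uv)$-trail, so $C$ is an odd $(u,v)$-trail cover of size at most $2k-2\leq 2k-1$, verifying outcome~(2) of the theorem.

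So suppose we have a collection $\T$ of $k$ edge-disjoint odd $(\uv,\uv)$-trails. The next step is to compute $\ld(u,v;G)$ via a standard max-flow computation together with a corresponding minimum $u$-$v$ cut $F\sse E$. If $\ld(u,v;G)<2k$, then $|F|\leq 2k-1$, and since every odd $(u,v)$-trail is in particular a $(u,v)$-walk and hence crosses every $u$-$v$ cut, $F$ is an odd $(u,v)$-trail cover of size at most $2k-1$, and we again return outcome~(2). Otherwise $\ld(u,v;G)\geq 2k$, and we can feed $\T$ together with a family of $2k$ edge-disjoint $(u,v)$-paths (obtained from the max-flow) into Theorem~\ref{thm:contacts} to obtain $k$ edge-disjoint odd $(u,v)$-trails, realizing outcome~(1).

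The potentially delicate part of this plan is ensuring that each of the three steps is genuinely polynomial-time and that their outputs glue together without hidden compatibility issues. Theorem~\ref{thm:sstrails} is polynomial by the algorithms of~\cite{ChudnovskyCG08,Geelen16} applied to the line-graph-style gadget outlined in Section~\ref{sec:sstrails}; the max-flow step is classical; and Theorem~\ref{thm:contacts} is constructive, producing $k$ edge-disjoint odd $(u,v)$-trails by a bounded ``contact-reduction'' procedure on $\T$ against the fixed $(u,v)$-paths, so its running time is polynomial. Finally the stated inequality $\tau(u,v;G)\leq 2\nu(u,v;G)+1$ follows by instantiating the algorithm at $k=\nu(u,v;G)+1$: outcome~(1) is impossible by the definition of $\nu$, so we must obtain outcome~(2), a cover of size at most $2(\nu(u,v;G)+1)-1=2\nu(u,v;G)+1$.
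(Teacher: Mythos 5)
Your proposal follows essentially the same route as the paper: dispatch $u=v$ via Theorem~\ref{thm:sstrails}, use a minimum $u$-$v$ cut as a cover when $\lambda(u,v)<2k$, and otherwise feed a packing of odd $(\uv,\uv)$-trails together with $2k$ edge-disjoint $(u,v)$-paths into Theorem~\ref{thm:contacts}; the derivation of $\tau\leq 2\nu+1$ at $k=\nu+1$ is also the intended one. The one step you elide is that Theorem~\ref{thm:sstrails}, as stated, concerns $(s,s)$-trails for a \emph{single} vertex $s$, so to get the dichotomy ``$k$ edge-disjoint odd $(\uv,\uv)$-trails or a cover of size $2k-2$'' when $u\neq v$ you must first delete the parallel $uv$ edges $E_{uv}$ (each of which is itself an odd $(u,v)$-trail of length one), identify $u$ and $v$ into a new vertex $s$ in the resulting graph, and apply Theorem~\ref{thm:sstrails} there with $k'=k-|E_{uv}|$, adding $E_{uv}$ back to whichever object is returned; this is a short and routine reduction, but it is needed to make your first step a legitimate invocation of the theorem rather than of a stronger unproved statement.
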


Theorem~\ref{thm:2k} is tight, as can be seen from Fig.~\ref{2gap}; we show in
Appendix~\ref{append-2k} that $\nu(u,v;G)=k$ and $\tau(u,v;G)=2k+1$ for this instance.
(The fact that Theorem~\ref{thm:2k} is tight was communicated to us by~\cite{Churchley16},
who provided a different tight example.) 

\begin{figure}[h!]
\centering
\begin{tikzpicture}

\setcounter{xx}{0}
\setcounter{yy}{0}
\coordinate (centre) at (\value{xx},\value{yy});
\node (dummy) at (centre) {};

\coordinate (u) at ([xshift=-5.5cm]centre);

\coordinate (aa) at ([xshift=2.5cm,yshift=2.2cm]u);
\coordinate (ba) at ([yshift=-1.4cm]aa);
\coordinate (ca) at ([xshift=1cm,yshift=-0.7cm]aa);
\coordinate (da) at ([xshift=1cm,yshift=0.7cm]ca);
\coordinate (ea) at ([yshift=-1.4cm]da);
\coordinate (fa) at ([xshift=1cm,yshift=-0.7cm]da);
\coordinate (ga) at ([xshift=1cm,yshift=0.7cm]fa);
\coordinate (ha) at ([yshift=-1.4cm]ga);

\coordinate (v) at ([xshift=2.5cm,yshift=-2.2cm]ga);
\coordinate (w) at ([yshift=1.5cm]da);

\draw [fill=black] (u) circle (\rad) node[left] {$u$};
\draw [fill=black] (v) circle (\rad) node[right] {$v$};
\draw [fill=black] (w) circle (\rad) node[above] {$w$};

\draw [fill=black] (aa) circle (\rad) node[above] {$a_1$};
\draw [fill=black] (ba) circle (\rad) node[below] {$b_1$};
\draw [fill=black] (ca) circle (\rad) node[above] {$c_1$};
\draw [fill=black] (da) circle (\rad) node[above] {$d_1$};
\draw [fill=black] (ea) circle (\rad) node[below] {$e_1$};
\draw [fill=black] (fa) circle (\rad) node[above] {$f_1$};
\draw [fill=black] (ga) circle (\rad) node[above] {$g_1$};
\draw [fill=black] (ha) circle (\rad) node[below] {$h_1$};

\draw (u) -- (aa);
\draw (u) -- (ba);
\draw (aa) -- (ba);
\draw (aa) -- (ca);
\draw (ba) -- (ca);
\draw (ca) -- (da);
\draw (ca) -- (ea);
\draw (da) -- (ea);
\draw (da) -- (fa);
\draw (ea) -- (fa);
\draw (fa) -- (ga);
\draw (fa) -- (ha);
\draw (ga) -- (v);
\draw (ha) -- (v);
\draw (ga) -- (ha);

\coordinate (ab) at ([xshift=2.5cm,yshift=-0.8cm]u);
\coordinate (bb) at ([yshift=-1.4cm]ab);
\coordinate (cb) at ([xshift=1cm,yshift=-0.7cm]ab);
\coordinate (db) at ([xshift=1cm,yshift=0.7cm]cb);
\coordinate (eb) at ([yshift=-1.4cm]db);
\coordinate (fb) at ([xshift=1cm,yshift=-0.7cm]db);
\coordinate (gb) at ([xshift=1cm,yshift=0.7cm]fb);
\coordinate (hb) at ([yshift=-1.4cm]gb);

\draw [fill=black] (ab) circle (\rad) node[above] {$a_k$};
\draw [fill=black] (bb) circle (\rad) node[below] {$b_k$};
\draw [fill=black] (cb) circle (\rad) node[above] {$c_k$};
\draw [fill=black] (db) circle (\rad) node[above] {$d_k$};
\draw [fill=black] (eb) circle (\rad) node[below] {$e_k$};
\draw [fill=black] (fb) circle (\rad) node[above] {$f_k$};
\draw [fill=black] (gb) circle (\rad) node[above] {$g_k$};
\draw [fill=black] (hb) circle (\rad) node[below] {$h_k$};

\draw (u) -- (ab);
\draw (u) -- (bb);
\draw (ab) -- (bb);
\draw (ab) -- (cb);
\draw (bb) -- (cb);
\draw (cb) -- (db);
\draw (cb) -- (eb);
\draw (db) -- (eb);
\draw (db) -- (fb);
\draw (eb) -- (fb);
\draw (fb) -- (gb);
\draw (fb) -- (hb);
\draw (gb) -- (v);
\draw (hb) -- (v);
\draw (gb) -- (hb);

\node[font=\bf] at ([yshift=-0.7cm]ea) {\vdots};

\draw (u) edge[bend left] (w);
\draw (v) edge[bend right] (w);

\end{tikzpicture}

\caption{Graph with $\nu(u,v)=k$, $\tau(u,v)=2k+1$.} \label{fig:smallgap} \label{2gap}
\end{figure}
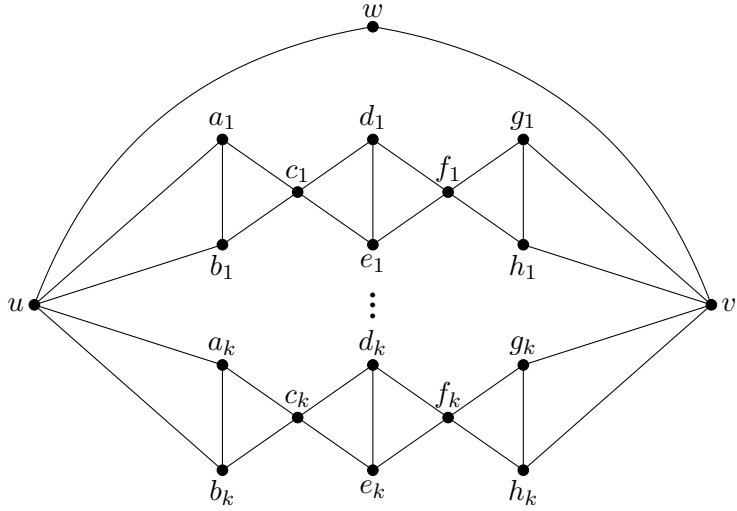

\begin{theorem} \label{thm:sstrails}
Let $G=(V,E)$ be an undirected graph and $s\in V$. 
For any nonnegative integer $k$, we can obtain in polynomial time: 
\begin{enumerate}[topsep=0.25ex, itemsep=0.25ex]
\item $k$ edge-disjoint odd $(s,s)$-trails in $G$, or
\item an odd $(s,s)$-trail cover of $G$ of size at most $2k-2$.
\end{enumerate}
\end{theorem}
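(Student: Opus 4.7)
The plan is to encode odd $(s,s)$-trails in $G$ as nonzero $A$-paths in a suitable $\group{2}$-labelled ``vertex-split'' graph $\hG$, and then invoke the min-max theorem of~\cite{ChudnovskyGGGLS06} for packing and covering nonzero $A$-paths, together with its polynomial-time algorithmic version in~\cite{ChudnovskyCG08,Geelen16}. As a preprocessing step I would strip out any loops at $s$: if there are $\ell$ of them, each is already an odd $(s,s)$-trail of length $1$, so I would apply the theorem to the loop-free graph with parameter $k - \ell$ and merge the outputs (either appending the $\ell$ loop-trails to the packing, or adjoining the $\ell$ loop-edges to the cover, giving total cover size $2(k-\ell) - 2 + \ell \le 2k - 2$).

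\textbf{Gadget.} For each $x \in V$ and each end of an edge of $G$ at $x$ I would introduce a vertex $v_x^e$ in $\hG$. For each $x$, let $K_x$ denote the clique on $\{v_x^e : e \text{ incident to } x\}$, with all clique edges labelled $0 \in \group{2}$. For each $e = xy \in E$, add a ``real'' edge $\{v_x^e, v_y^e\}$ labelled $1 \in \group{2}$. Set $A := K_s$. Every simple $(s,s)$-trail of $G$ (one not revisiting $s$ internally) then corresponds to an $A$-path in $\hG$ that alternates, at each transited non-$s$ vertex, between real and clique edges; conversely any $A$-path can be shortcut within each clique $K_x$ (for $x \ne s$) to this canonical form without changing its $\group{2}$-label or enlarging its vertex set. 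Under this correspondence, an $A$-path is nonzero iff the associated simple $(s,s)$-trail is odd, and two canonical $A$-paths are internally vertex-disjoint iff the associated trails are edge-disjoint, since $v_x^e$ lies on the canonical path for $T$ precisely when $e$ is an edge of $T$.

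\textbf{Decoding.} I would now apply the CGGGLS theorem to $(\hG, A)$. In the first case, we get $k$ vertex-disjoint nonzero $A$-paths, which after canonicalization decode to $k$ edge-disjoint odd $(s,s)$-trails in $G$. In the second case we get $X \sse V(\hG) \sm A$ of size at most $2k-2$ meeting every nonzero $A$-path; setting $C := \{e \in E : v_x^e \in X \text{ for some } x\}$ gives $|C| \le |X| \le 2k-2$. To verify that $C$ is an odd $(s,s)$-trail cover, take any odd $(s,s)$-trail $T$, split it at its visits to $s$ into simple $(s,s)$-subtrails, and note that their parities sum to odd, so some piece $T'$ is a simple odd $(s,s)$-trail; the canonical $A$-path for $T'$ must be hit by some $v_x^e \in X$ (necessarily with $x \ne s$), whence $e \in C$ lies on $T'$ and hence on $T$.

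\textbf{Expected obstacle.} The substantive point is establishing the correspondence cleanly -- in particular the shortcutting step inside each clique $K_x$ that lets us take $A$-paths to be canonical, so that ``$v_x^e$ on the path'' becomes equivalent to ``$e$ used in the trail'', together with the endpoint analysis ensuring that vertex-disjoint $A$-paths at the $A$-endpoints correspond to distinct $s$-incident edges in the trails. Once this bijective picture is pinned down, the theorem reduces to a direct translation of the packing-covering result of~\cite{ChudnovskyGGGLS06,ChudnovskyCG08,Geelen16}.
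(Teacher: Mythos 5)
Your proposal is correct and follows essentially the same route as the paper: the identical line-graph-style gadget (cliques $[x]$ with $0$-labels, real edges labelled $1$, $A=[s]$, $\Gamma=\group{2}$) followed by an invocation of the packing--covering theorem of~\cite{ChudnovskyGGGLS06,ChudnovskyCG08,Geelen16} and the same edge/vertex translation of both outputs. The extra care you take with loops at $s$ and with trails that revisit $s$ (splitting into simple pieces of which one must be odd) is sound and only refines details that the paper's Lemma~\ref{grpcorres} treats slightly more directly.
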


Theorem~\ref{thm:2k} follows readily from 
the following two results. 

\begin{theorem} \label{thm:contacts}
Let $G=(V,E)$ be an undirected graph, and $u,v\in V$ with $u\neq v$. 
Let $\hT$ be a collection of edge-disjoint odd $(\uv,\uv)$-trails in $G$. If
$\lambda(u,v)\geq 2 \cdot |\hT|$, then we can obtain in polytime $|\hT|$ edge-disjoint odd 
$(u,v)$-trails in $G$.   
\end{theorem}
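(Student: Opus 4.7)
The plan is to fix an auxiliary collection $\Pc = \{P_1,\dots,P_{2k}\}$ of $2k$ edge-disjoint $(u,v)$-paths in $G$ (computable in polynomial time by Menger's theorem, since $\lambda(u,v)\ge 2k$) and then iteratively modify $\hT$, using $\Pc$ as a scaffold, until every trail in $\hT$ becomes a $(u,v)$-trail. Throughout, I would maintain the invariant that the current $\hT$ is a family of $k$ edge-disjoint odd $(\uv,\uv)$-trails. Progress is measured by the lexicographic potential $\Phi(\hT) = \bigl(k - \mu(\hT),\,\chi(\hT)\bigr)$, where $\mu(\hT)$ is the number of trails in $\hT$ that are $(u,v)$-trails, and $\chi(\hT)$ is a notion of \emph{contact} between $\hT$ and $\Pc$ (for instance $\chi(\hT) = |E(\hT)\cap E(\Pc)|$, or a finer count of shared internal vertices weighted over pairs in $\hT\times\Pc$). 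Since $\mu(\hT)\le k$ and $\chi(\hT)\ge 0$, the process terminates in polynomially many iterations and halts only when $\mu(\hT)=k$, at which point $\hT$ is our output.

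The core is a main lemma: whenever $\mu(\hT)<k$, some polytime-computable local modification produces a new valid $\hT'$ with $\Phi(\hT')<_{\mathrm{lex}}\Phi(\hT)$. To prove it, pick a trail $T\in\hT$ whose endpoints coincide; WLOG both endpoints are $u$, so $T$ is an odd $(u,u)$-trail. I would search $\Pc$ for a path $P$ that is \emph{non-interfering} with $\hT\setminus\{T\}$, in the sense that $P$ shares no edge with any trail in $\hT\setminus\{T\}$. A pigeonhole count on end-edges (there are $2k$ paths in $\Pc$, each occupying one edge at $u$ and one at $v$, while the $\mu<k$ existing $(u,v)$-trails in $\hT$ occupy only $\mu$ edges at each of $u$ and $v$) ensures that either such a non-interfering $P$ exists, or the current $\chi$ is strictly positive and can be decreased by a local reroute described below.

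In the non-interfering case, consider $F = E(T)\cup E(P)$ (which equals $E(T)\triangle E(P)$ since $T$ and $P$ are edge-disjoint). In $F$, only $u$ and $v$ have odd degree, so $F$ decomposes into a $(u,v)$-trail plus edge-disjoint cycles. Because $T$ is closed at $u$ with odd total length, the cycle-decomposition of $E(T)\subseteq F$ contains an odd cycle, and this odd cycle persists in $F$; we can therefore choose the decomposition to extract an odd $(u,v)$-trail $T^{\star}\subseteq F$. Replacing $T$ by $T^{\star}$ preserves edge-disjointness (since $P$ is non-interfering and $T^{\star}\subseteq E(T)\cup E(P)$) and preserves oddness of every trail, strictly increasing $\mu$. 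In the contact-reducing case, I would locate a shared edge or internal vertex between a path $P\in\Pc$ and some $T'\in\hT$, and swap the corresponding sub-segments of $T'$ and $P$; any parity distortion of $T'$ introduced by the swap is absorbed by XORing in an odd sub-cycle of $T'$ (available because $T'$ is odd), yielding a new $\hT$ with the same $\mu$ but strictly smaller $\chi$.

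The main obstacle is specifying the contact-reducing reroute cleanly: a naive sub-segment swap can destroy edge-disjointness with \emph{other} trails of $\hT$ or disturb the odd length of $T'$. The device that makes it go through is precisely the ``odd-sub-cycle correction'' above—odd cycles are always on hand because every trail in $\hT$ is odd—together with a careful choice of the outermost overlap point, so that the swap only perturbs $T'$ and leaves the rest of $\hT$ intact. Once both modification types are in place, termination in polytime is immediate from $0\le k-\mu\le k$ and $0\le\chi\le |E|$, and each iteration is polynomial since it involves only max-flow-style computations on explicit trails and paths.
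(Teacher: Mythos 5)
Your overall scaffolding---fix $2k$ edge-disjoint $(u,v)$-paths $\Pc$, iteratively modify $\hT$, and argue termination via a potential combining the number of $(u,v)$-trails with a contact measure against $\Pc$---is exactly the paper's strategy. But the core combinatorial lemma that makes each iteration succeed is missing, and the two devices you substitute for it do not work.

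First, your pigeonhole for finding a ``non-interfering'' path counts only the end-edges at $u$ and $v$, whereas interference is caused by shared edges anywhere along a path; the trails in $\hT$ can easily meet the interior of all $2k$ paths, so no non-interfering path need exist even when $\mu(\hT)=0$. Your fallback (``locate a shared segment and swap, absorbing parity by XORing in an odd sub-cycle of $T'$'') is where the real difficulty lives, and it is not a well-defined operation: an odd $(u,v)$-trail need not contain any cycle at all, and deleting an odd cycle from a closed odd trail can disconnect it or change its endpoints; nothing guarantees the swapped-in segment of $P$ is disjoint from the \emph{other} trails; and nothing guarantees the contact count strictly drops rather than merely moving. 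The paper's resolution is a sharper dichotomy (its Lemma~\ref{lem:configs}): every path is examined only via its \emph{first} and \emph{last} contact, so the prefix of $P$ up to its first contact is automatically edge-disjoint from all of $\hT$; if some first contact lands on a $(v,v)$-trail (or last contact on a $(u,u)$-trail) one immediately manufactures a new odd $(u,v)$-trail; and otherwise a counting argument produces \emph{three} distinct paths whose first contacts all lie on one trail $T\in\Tu\cup\Tuv$. Those three contact points cut $T$ into four pieces $S_0,\dots,S_3$, and the four candidate trails $S_0+\bQ_1$, $Q_1+S_1+\bQ_2$, $Q_2+S_2+\bQ_3$, $Q_3+S_3$ have edge sets whose disjoint union has the parity of $T$ (the $Q_i$ each appear twice), so one of them is odd, has both ends in $\uv$, and omits at least one of the three contacts---giving a strict decrease. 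The number three is essential here and has no analogue in your write-up. Relatedly, your lexicographic potential with $\mu$ ranked first cannot accommodate this step, since the split may convert an odd $(u,v)$-trail into an odd $(u,u)$-trail, decreasing $\mu$; the paper instead uses $2\,\C(\Pc,\T)-k_{uv}(\T)$ to trade one contact for one lost $(u,v)$-trail. Your non-interfering case is essentially fine (it is the paper's case (a)), modulo the fact that ``non-interfering with $\hT\setminus\{T\}$'' must be strengthened to disjointness from $T$ as well before writing $E(T)\cup E(P)=E(T)\triangle E(P)$.
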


\begin{proofof}{Theorem~\ref{thm:2k}}
If $u=v$, then Theorem~\ref{thm:sstrails} yields the desired statement. So suppose
$u\neq v$. We may assume that $\lambda(u,v)\geq 2k$, since otherwise a minimum $(u,v)$-cut 
in $G$ is an odd $(u,v)$-trail cover of the required size. 
Let $E_{uv}$ be the $uv$ edge(s) in $G$ (which could be $\es$).
Let $\hG$ be obtained from $G-E_{uv}$ by identifying $u$ and $v$ into a new vertex $s$. 
(Note that $\hG$ has no loops.) 
Any odd $(u,v)$-trail in $G-E_{uv}$ 
maps to an odd $(s,s)$-trail in $\hG$. We apply Theorem~\ref{thm:sstrails} to 
$\hG, s, k'=k-|E_{uv}|$. 
If this returns an odd-$(s,s)$-trail cover $C$ of size at most $2k'-2$, then 
$C\cup E_{uv}$ is an odd-$(u,v)$-trail cover for $G$ of size at most $2k-2$. 
If we obtain a collection of $k'$ edge-disjoint odd $(s,s)$-trails in $\hG$, then
these together with $E_{uv}$ yield $k$ edge-disjoint odd $(\uv,\uv)$-trails in $G$.
Theorem~\ref{thm:contacts} then yields the required $k$ edge-disjoint odd $(u,v)$-trails. 
Polytime computability follows from the polytime computability in
Theorems~\ref{thm:sstrails} and~\ref{thm:contacts}. 
\end{proofof}

Theorem~\ref{thm:contacts} is our chief technical insight, which facilitates the
decoupling of the parity and $u$-$v$ connectivity requirements of odd $(u,v)$-trails,
thereby driving the entire proof. 
(It can be seen as a refinement of Theorem 5.1 in~\cite{Ibrahimpur16}.)
While Theorem~\ref{thm:sstrails} 
returns $(\uv,\uv)$-trails with the right
{\em parity}, 
Theorem~\ref{thm:contacts} supplies the missing ingredient needed to convert these into 
$(u,v)$-trails (of the same parity). 
We give an overview of the proofs of Theorems~\ref{thm:sstrails}
and~\ref{thm:contacts} below before delving into the details in the subsequent sections.   
We remark that both Theorem~\ref{thm:sstrails} and Theorem~\ref{thm:contacts} are 
{\em tight} as well, as we show in Appendices~\ref{append-sstexmpl}
and~\ref{append-contacts} respectively.  

\smallskip
The proof of Theorem~\ref{thm:contacts} relies on elementary arguments and proceeds as
follows (see Section~\ref{sec:contacts}).  
Let $\Pc$ be a collection of $2 \cdot |\hT|$ edge-disjoint $(u,v)$-paths. 
We provide a simple, efficient procedure to iteratively modify $\hT$
(whilst maintaining $|\hT|$ edge-disjoint odd $(\uv,\uv)$-trails) and eventually
obtain $|\hT|$ odd $(u,v)$-trails. 
Let $\Pc_0\sse\Pc$ be the collection of paths of $\Pc$ that are
edge-disjoint from trails in $\hT$.
First, we identify the trivial case where $|\Pc_0|$ is sufficiently large. If so, these
paths and $\hT$ directly yield odd $(u,v)$-trails as follows:
odd-length paths in $\Pc_0$ are already odd $(u,v)$-trails, and even-length paths in
$\Pc_0$ can be combined with odd $(u,u)$- and odd $(v,v)$- trails 
to obtain odd $(u,v)$-trails. 

The paths in $\Pc\sm\Pc_0$, all share at least one edge with some trail in $\hT$.
Each path is a sequence of edges from $u$ to $v$.
If the first edge that a path $P\in\Pc$ shares with a trail in $\hT$ lies on a
$(v,v)$-trail $T$, then it is easy to use parts of $P$ and $T$ to obtain an odd
$(u,v)$-trail that is edge-disjoint from all other trails in $\hT$,
and thereby make progress by increasing the number of odd $(u,v)$-trails in the
collection. A similar conclusion holds if the last edge that a path shares with a 
trail in $\hT$ lies on a $(u,u)$-trail. If neither of the above cases apply, then the
paths in $\Pc\sm\Pc_0$ are in a sense \emph{highly tangled} (which we formalize 
later) with trails in $\hT$. We then infer that $\Pc\sm\Pc_0$ and $\hT$ must satisfy some
simple structural properties, and leverage this to 
carefully modify the collection $\hT$ (while preserving edge-disjointness) so that the new
set of trails are ``less tangled'' with $\Pc$ than $\hT$, and thereby make progress.
Continuing this procedure a polynomial number of times yields the desired collection of
$|\hT|$ edge-disjoint odd $(u,v)$-trails. 

\smallskip
The proof of Theorem~\ref{thm:sstrails} relies 
on the key observation that 
we can cast our problem 
as the problem of packing and covering nonzero $A$-paths in a group-labeled graph 
$(H,\Gamma)$~\cite{ChudnovskyGGGLS06} for a suitable choice of $A, H$, and $\Gamma$ (see
Section~\ref{sec:sstrails}). In the latter problem,
(1) $H$ denotes an oriented graph whose arcs are labeled with elements of a group $\Gm$; 
(2) an $A$-path is a path in the undirected version of $H$ whose ends lie in 
$A$; and 
(3) the ($\Gm$-) length of an $A$-path $P$ is the sum of $\pm\gm_e$s (suitably defined)
for arcs in $P$, and a nonzero $A$ path is one whose length is not zero (where zero is the
identity element of $\Gm$).
Chudnovsky et al.~\cite{ChudnovskyGGGLS06} 
show that either there are $k$ {\em vertex-disjoint} non-zero $A$-paths, or there is a 
vertex-set of size at most $2k-2$ intersecting every non-zero $A$-path (Theorem 1.1
in~\cite{ChudnovskyGGGLS06}). 
We show that applying their result to a suitable ``gadget graph'' $H$ (essentially the 
line graph of $G$), yields Theorem~\ref{thm:sstrails} (see Section~\ref{sec:sstrails}).  
Polytime computability follows because a subsequent paper~\cite{ChudnovskyCG08} gave a
polytime algorithm for finding a maximum-size collection of vertex-disjoint non-zero
$A$-paths, and it is implicit in their proof that this also yields a suitable
vertex-covering of non-zero $A$-paths~\cite{Geelen16}. 

We remark that while the use of the packing-covering result in~\cite{ChudnovskyGGGLS06}
yields quite a compact proof of Theorem~\ref{thm:sstrails}, 
it also makes the resulting proof somewhat opaque 
since we apply the result in~\cite{ChudnovskyGGGLS06} to the gadget graph. 
However, it is possible to translate the min-max theorem for packing vertex-disjoint  
nonzero $A$-paths proved in~\cite{ChudnovskyGGGLS06} to our setting and obtain
the following more-accessible min-max theorem for packing edge-disjoint odd $(s,s)$-trails  
(stated in terms of $G$ and not the gadget graph).   
In Section~\ref{minmax}, we prove that

\begin{equation}
\nu(s,s;G) = \min \biggl(|E(S)-F| + 
\sum_{C\in\comp(G-S)} \floor{\frac{|E(S,C)|}{2}} \biggr) \label{mmeqn}
\end{equation}
where the minimum is taken over all bipartite subgraphs $(S,F)$ of $G$ such that 
$s\in S$. 

Notice that Theorem~\ref{thm:sstrails} follows easily from this min-max formula: if 
$(S^*,F^*)$ is a bipartite subgraph of $G$ with $s\in S^*$ that attains the minimum above,
then the edges in $E(S^*)-F$ combined with $\max\{0,|E(S^*,C)|-1\}$ edges from $E(S^*,C)$
for every component $C$ of $G-S^*$ yields a cover of size at most twice the RHS of
\eqref{mmeqn}.

\section{Proof of Theorem~\ref{thm:contacts}: converting edge-disjoint odd
$(\uv,\uv)$-trails to edge-disjoint odd $(u,v)$-trails} \label{sec:contacts}  
Recall that $\hT$ is a collection of edge-disjoint odd $(\uv,\uv)$-trails in $G$. We
denote the subset of odd $(u,u)$-trails, odd $(v,v)$-trails, and 
odd $(u,v)$-trails in $\hT$ by $\hT_{uu}$, $\hT_{vv}$, and $\hT_{uv}$, respectively. Let
$k_{uu}(\hT)=|\hT_{uu}|$, $k_{vv}(\hT)=|\hT_{vv}|$, and $k_{uv}(\hT)=|\hT_{uv}|$. 
To keep notation simple, we will drop the argument $\hT$ when its clear from the
context. Since we are given that $\lambda(u,v) \geq 2 \cdot |\hT|$, we 
can obtain a collection $\Pc$ of $2 \cdot |\hT|$ edge-disjoint $(u,v)$-paths in $G$.  
In the sequel, while we will modify our collection of odd $(\uv,\uv)$-trails, $\Pc$ stays
fixed.   

We now introduce the key notion of a {\em contact} between a trail $T$ and a $(u,v)$-path
$P$. Suppose that $P = (x_0,e_1,x_1, \dots, e_r, x_r)$ for some $r \geq 1$. 

\begin{definition} 
A \emph{contact} between $P$ and $T$ is a \emph{maximal subpath} $S$ of $P$ containing at
least one edge such that $S$ is also a subtrail of $T$ i.e., for 
$0\leq i < j \leq r$, we say that $(x_i,e_{i+1},x_{i+1}, \dots, e_j, x_j)$ is a contact
between $P$ and $T$ if $(x_i,e_{i+1},x_{i+1}, \dots, e_j, x_j)$ is a subtrail of $T$, but
neither $(x_{i-1},e_i,x_i, \dots e_j, x_j)$ (if $i > 0$) nor $(x_i,e_{i+1},x_{i+1},
\dots, x_j, e_{j+1}, x_{j+1})$ (if $j < r$) is a subtrail of $T$. 
\begin{equation*}
\text{Define} \quad
  \C(P,T) = \Bigl|\bigl\{(i,j): 0 \leq i < j \leq r, \quad 
(x_i,e_{i+1},x_{i+1}, \dots, e_j, x_j) 
\text{is a contact between $P$ and $T$}\bigr\}\Bigr|
\end{equation*}
\end{definition}

By definition, contacts between $P$ and $T$ are edge disjoint. 
For an edge-disjoint collection $\T$ of trails, 
we use $\C(P,\T)$ to denote $\sum_{T \in \T}\C(P,T)$. 
So if $\C(P,\T) = 0$, then $P$ is edge-disjoint from every trail in $\T$.  
Otherwise, 
we use the term \emph{first contact} of $P$ to refer to the contact arising
from the first edge that $P$ shares with some trail in $\T$ (note that $P$ is a
$(u,v)$-walk so is a sequence from $u$ to $v$). 
Similarly, the \emph{last contact} of $P$ is the contact arising from the last edge
that $P$ shares with some trail in $\T$. 
If $\C(P,\T)=1$, then the first and last contacts of $P$ are the same. 
We further overload notation and use $\C(\Pc,\T)$ to denote 
$\sum_{P \in \Pc}\C(P,\T) =\sum_{P \in \Pc,T \in \T} \C(P,T)$. 
We use $\C(\Pc,\T)$ as a measure of how ``tangled'' $\T$ is with $\Pc$. 
The following lemma 
classifies five different cases that arise for any pair of
edge-disjoint collections of 
{odd $(\uv,\uv)$-trails and $(u,v)$-paths.}

\begin{lemma} \label{lem:configs}
Let $\T$ be a collection of edge-disjoint odd $(\uv,\uv)$-trails in $G$. If 
$|\Pc| \geq 2 \cdot |\T|$, then one of the following conditions holds. 
\begin{enumerate}[(a), topsep=0.5ex, itemsep=0.5ex]
\item There are at least $k_{uu}(\T) + k_{vv}(\T)$ paths in $\Pc$ that make no contact with any trail in $\T$.
\item There exists a path $P \in \Pc$ that makes its first contact with a trail $T \in \Tv$.
\item There exists a path $P \in \Pc$ that makes its last contact with a trail $T \in \Tu$.
\item There exist three distinct paths $P_1,P_2,P_3 \in \Pc$ that make their first contact
  with a trail $T \in \Tu \cup \Tuv$.  
\item There exist three distinct paths $P_1,P_2,P_3 \in \Pc$ that make their last contact
  with a trail $T \in \Tuv \cup \Tv$. 
\end{enumerate}
\end{lemma}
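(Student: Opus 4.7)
The plan is to argue by contradiction: assume that each of (b)--(e) fails and deduce (a). The key observation is that every path in $\Pc$ either makes no contact with any trail in $\T$ (contributing to (a)) or has a well-defined first contact and last contact. So I would partition $\Pc$ into $\Pc_0$ (no-contact paths) and $\Pc_1=\Pc\sm\Pc_0$, and then control $|\Pc_1|$ by counting first/last contacts per trail type.

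For each trail $T\in\T$, let $f(T)$ be the number of paths in $\Pc$ whose first contact is with $T$, and let $\ell(T)$ be the analogous count for last contacts. Every path in $\Pc_1$ contributes exactly $1$ to $\sum_T f(T)$ and exactly $1$ to $\sum_T \ell(T)$, so $|\Pc_1|=\sum_T f(T)=\sum_T \ell(T)$. Now I translate the failure of (b)--(e) into upper bounds on these counts: (b) failing means $f(T)=0$ for all $T\in\Tv$; (d) failing means $f(T)\leq 2$ for all $T\in\Tu\cup\Tuv$; hence
\[
|\Pc_1|\leq 2\bigl(k_{uu}(\T)+k_{uv}(\T)\bigr).
\]
Symmetrically, (c) and (e) failing yield $\ell(T)=0$ on $\Tu$ and $\ell(T)\leq 2$ on $\Tuv\cup\Tv$, so
\[
|\Pc_1|\leq 2\bigl(k_{uv}(\T)+k_{vv}(\T)\bigr).
\]

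Averaging these two inequalities gives $|\Pc_1|\leq k_{uu}(\T)+2k_{uv}(\T)+k_{vv}(\T)$. Combined with the hypothesis $|\Pc|\geq 2|\T|=2\bigl(k_{uu}(\T)+k_{uv}(\T)+k_{vv}(\T)\bigr)$, this yields
\[
|\Pc_0|=|\Pc|-|\Pc_1|\geq k_{uu}(\T)+k_{vv}(\T),
\]
which is precisely condition (a), completing the proof. The main thing to get right is simply the correct bookkeeping of first and last contacts: each path in $\Pc_1$ must be counted exactly once on each side, and the type restrictions on $T$ coming from the failure of (b)/(c) must be combined correctly with the multiplicity restrictions coming from the failure of (d)/(e). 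I do not expect any serious obstacle; the argument is essentially a double-counting argument once the right quantities are set up.
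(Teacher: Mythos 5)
Your proof is correct and takes essentially the same approach as the paper: both arguments count first and last contacts per trail, use the failure of (b)/(c) to restrict which trail types can receive them and the failure of (d)/(e) to cap each such trail at two, and then conclude by pigeonhole. The only cosmetic difference is that you average the two resulting bounds on $|\Pc_1|$ to deduce (a) directly, whereas the paper splits into cases according to whether $k_{uu}\leq k_{vv}$ or $k_{vv}\leq k_{uu}$ to conclude (d) or (e).
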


\begin{proof}
To keep notation simple, we drop the argument $\T$ in the proof.
Suppose that conclusion (a) does not hold. 
Then there are at 
at least $2 \cdot |\T| - (k_{uu} + k_{vv} - 1) = 2k_{uv}+ k_{uu} + k_{vv} + 1$ paths in
$\Pc$ that make at least one contact with some trail in $\T$. Let $\Pc'\sse\Pc$ be this
collection of paths. 
If either conclusions (b) or (c) hold (for some $P\in\Pc'$), then we are done, so assume
that this is not the case.
Then, every path $P \in \Pc'$ makes its first contact with a trail in $\Tu\cup\Tuv$ 
and its last contact with a trail in $\Tuv \cup \Tv$. 
Note that the number of first and last contacts are both at least 
$2k_{uv} + k_{uu} + k_{vv} + 1>2\cdot\min(k_{uv}+k_{uu},k_{uv}+k_{vv})$.  
So if $k_{uu}\leq k_{vv}$, then by the Pigeonhole principle, there are at least 3 paths
that make their first contact with some $T\in\T_{uu}\cup\T_{uv}$, i.e., conclusion (d)
holds. Similarly, if $k_{vv}\leq k_{uu}$, then conclusion (e) holds.
\end{proof}

We now leverage the above classification and show that in each of the above five cases, 
we can make progress by ``untangling'' the trails (i.e., decreasing $C(\Pc,\T)$) and/or
increasing the number of odd $(u,v)$-trails in our collection. 

\begin{lemma} \label{lem:potential}
Let $\T$ be a collection of edge-disjoint odd $(\uv,\uv)$-trails. If 
$|\Pc|\geq 2 \cdot |\T|$, we can obtain another collection $\T'$ of edge-disjoint odd
$(\uv,\uv)$-trails such that at least one of the following holds.
\begin{enumerate}[(i), topsep=0.5ex, itemsep=0.5ex, leftmargin=6ex]
\item $k_{uv}(\T') = |\T|$.
\item $\C(\Pc,\T') \leq \C(\Pc,\T)$ and $k_{uv}(\T') = k_{uv}(\T) + 1$.
\item $\C(\Pc,\T') \leq \C(\Pc,\T) - 1$ and $k_{uv}(\T') \geq k_{uv}(\T) - 1$.
\end{enumerate}
\end{lemma}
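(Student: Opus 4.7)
The plan is to invoke Lemma~\ref{lem:configs} to split into the five configurations (a)--(e) and, in each case, perform a localized surgery on $\T$ to build $\T'$ satisfying one of (i), (ii), (iii).

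In case (a), let $\Pc_0 \sse \Pc$ be the set of paths in $\Pc$ that are edge-disjoint from every trail in $\T$; by hypothesis $|\Pc_0| \geq k_{uu}(\T) + k_{vv}(\T)$. An odd-length path in $\Pc_0$ is already an odd $(u,v)$-trail, and an even-length path $P \in \Pc_0$ concatenated at its $u$-end with an odd $(u,u)$-trail (or at its $v$-end with an odd $(v,v)$-trail) yields an odd $(u,v)$-trail. A short parity count shows we can always match even paths of $\Pc_0$ with trails in $\Tu \cup \Tv$ and use the odd paths directly so as to convert all $k_{uu}(\T) + k_{vv}(\T)$ non-$(u,v)$-trails of $\T$ into odd $(u,v)$-trails, keeping the trails in $\Tuv$ untouched. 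The resulting $\T'$ is a collection of $|\T|$ edge-disjoint odd $(u,v)$-trails, i.e., $k_{uv}(\T') = |\T|$, giving~(i).

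Cases (b) and (c) are symmetric; I treat (b). Let $P \in \Pc$ first-contact $T \in \Tv$ at $x_i$, and let $l$ be the first position on $T$ with $y_l = x_i$. Then $T$ decomposes into a $(v, x_i)$-subtrail $T_a$ and an $(x_i, v)$-subtrail $T_b$ with $|T_a| + |T_b|$ odd, so exactly one of $P[u..x_i] \cdot T_b$ and $P[u..x_i] \cdot \rev(T_a)$ is odd; take this as $T_{\text{new}}$ and set $\T' = (\T \sm \{T\}) \cup \{T_{\text{new}}\}$. The edges of $T_{\text{new}}$ consist of $\{e_1, \ldots, e_i\}$ (which lie on no trail of $\T$) plus a subset of the edges of $T$, so $\T'$ is again a collection of edge-disjoint odd $(\uv,\uv)$-trails, and clearly $k_{uv}(\T') = k_{uv}(\T) + 1$. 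To confirm $\C(\Pc,\T') \leq \C(\Pc,\T)$, one chooses the direction along $T$ so as to discard the first shared edge $e_{i+1}$; a trail-by-trail contact comparison then shows that the single contact between $P$ and $T_{\text{new}}$ introduced by the added prefix is offset by contacts lost on the discarded half of $T$.

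Cases (d) and (e) are the main technical obstacle; they are symmetric, so consider (d). Three paths $P_1, P_2, P_3 \in \Pc$ first-contact the same trail $T \in \Tu \cup \Tuv$ at $z_1, z_2, z_3$, with prefixes $P_j[u..z_j]$ edge-disjoint from $\T$ and pairwise edge-disjoint. I reroute $T$ through $u$ by splicing (the reverses of) two of these prefixes at the first visits of the corresponding $z_j$'s on $T$, splitting the original $T$-edges into subtrails meeting the prefixes at $u$ and at the spliced $z_j$'s, which can be reassembled into one or two new odd $(\uv,\uv)$-trails plus some discarded $T$-edges. A pigeonhole argument on either the relative order in which $T$ visits $z_1, z_2, z_3$ or on the parities of the three prefixes picks which two prefixes to splice so that $\T'$ satisfies (ii) (gaining one odd $(u,v)$-trail with $\C$ not increased) or (iii) (strictly decreasing $\C$ while losing at most one odd $(u,v)$-trail). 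The main difficulty is the bookkeeping: the new trails in $\T'$ use edges drawn from the $P_j$-prefixes, which may create fresh contacts with other paths in $\Pc$, and one must show these are outweighed by contacts destroyed on the discarded portions of $T$. The availability of \emph{three} prefixes (rather than two) is precisely what provides the slack --- via pigeonhole on visit-orders or parities --- to guarantee that one of (ii), (iii) is met.
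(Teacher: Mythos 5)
Your overall strategy matches the paper's: invoke Lemma~\ref{lem:configs} and handle the five configurations separately. Your treatments of cases (a)--(c) are essentially the paper's arguments and are fine, modulo one misstatement in case~(b): you cannot ``choose the direction along $T$ so as to discard the first shared edge,'' because which half of $T$ is kept is dictated by parity. No choice is needed, though: the prefix $P[u..x]$ is edge-disjoint from every trail in $\T$, so in the new trail it either extends the old first contact of $P$ (if that contact's edges lie on the retained half) or replaces it (if they lie on the discarded half); either way $\C(P,T')\leq\C(P,T)$, and no other path gains contacts.

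The genuine gap is in cases (d)/(e), which carry the real content of the lemma, and where your write-up explicitly defers ``the bookkeeping'' and never exhibits a construction that provably lands in (ii) or (iii). The missing idea is the following explicit decomposition. Order the first-contact vertices $x_1,x_2,x_3$ along $T$, let $Q_i$ be the prefix of $P_i$ up to $x_i$, and let $S_0,S_1,S_2,S_3$ be the four segments into which $x_1,x_2,x_3$ split $T$. Form the four candidate trails $T_1=S_0+\bQ_1$, $T_2=Q_1+S_1+\bQ_2$, $T_3=Q_2+S_2+\bQ_3$, $T_4=Q_3+S_3$. Their edge multiset is $E(T)$ plus each $Q_i$ twice, hence has the same parity as $T$ (odd), so at least one $T_i$ is an odd $(\uv,\uv)$-trail; that single trail replaces $T$. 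This disposes of your parity worry without any pigeonhole on visit-orders or prefix parities, and it is three contact points (four candidates) -- not ``splicing two of the three prefixes'' -- that makes the parity count close. Your concern about ``fresh contacts with other paths in $\Pc$'' is also a non-issue: each $Q_j$ precedes the first contact of $P_j$ and so is edge-disjoint from every trail of $\T$, and the paths of $\Pc$ are pairwise edge-disjoint, so a prefix $Q_j$ can only touch $P_j$ itself, where it merges with or replaces $P_j$'s first contact exactly as in case~(b). Finally, each $T_i$ omits at least one of the three first-contact edges (e.g.\ $T_2$ retains only $S_1$ from $T$, so $P_3$'s first contact, whose edge is adjacent to $x_3$, is destroyed and nothing replaces it), which yields $\C(\Pc,\T')\leq\C(\Pc,\T)-1$; since the chosen $T_i$ may be a $(u,u)$-trail while $T$ was a $(u,v)$-trail, one only gets $k_{uv}(\T')\geq k_{uv}(\T)-1$, i.e.\ conclusion~(iii). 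Without this (or an equivalent) explicit construction, your case (d)/(e) is a plan rather than a proof.
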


\begin{proof}
If $k_{uv}(\T) = |\T|$, then (i) holds trivially by taking $\T' = \T$. 
So we may assume that $\T$ contains some odd $(u,u)$- or odd $(v,v)$-trail. 
Observe that $\T$ and $\Pc$ satisfy the conditions of Lemma~\ref{lem:configs}, so at least
one of the five conclusions of Lemma~\ref{lem:configs} applies. We handle each case
separately.   

\begin{enumerate}[(a)]
\item At least $k_{uu}(\T) + k_{vv}(\T)$ paths in $\Pc$ have zero contacts with $\T$. 
Let $\Pc_0=\{P\in\Pc: \C(P,\T)=0\}$. 
Consider some $P \in \Pc_0$. If $P$ is odd, 
we can replace an odd $(u,u)$- or odd $(v,v)$- trail in $\T$ with $P$. If $P$ is
even, then $P$ can be combined with an odd $(u,u)$- or odd $(v,v)$- trail to obtain an odd
$(u,v)$-trail. Since $|\Pc_0| \geq k_{uu}(\T) + k_{vv}(\T)$, we create
$k_{uu}(\T)+k_{uv}(\T)$ odd $(u,v)$-trails this way, and this new collection $\T'$ 
satisfies (i).

\item Some $P \in \Pc$ makes its first contact with an odd $(v,v)$-trail $T \in \T$. Let
the first vertex in the first contact between $P$ and $T$ be $x$. Observe that $x$
partitions the trail $T$ into two subtrails $S_1$ and $S_2$. Since $T$ is an odd trail,
exactly one of $S_1$ and $S_2$ is odd. We can now obtain an odd $(u,v)$-trail $T'$ 
by traversing $P$ from $u$ to $x$, 
and then traversing $S_1$ or $S_2$, whichever yields odd parity (see Fig.~\ref{casebfig}). 
Since $P$ already made a contact with $T$, we have $\C(P,T') \leq \C(P,T)$, 
and $\C(Q,T')\leq \C(Q,T)$ for any other path $Q \in \Pc$. 
Thus, taking $\T' = (\T \cup \{ T' \}) \setminus \{ T \}$, we have 
$\C(\Pc,\T')\leq \C(\Pc,\T)$, and (ii) holds.

\begin{figure}[h!]
\centering
\begin{tikzpicture}

\setcounter{xx}{0}
\setcounter{yy}{0}
\coordinate (centre) at (\value{xx},\value{yy});
\node (dummy) at (centre) {};

\coordinate (u) at ([xshift=-4cm,yshift=-3cm]centre);
\coordinate (v) at ([xshift=4cm]u);
\coordinate (xa) at ([xshift=-1cm,yshift=1.7cm]v);
\coordinate (xb) at ([xshift=0.25cm,yshift=1cm]xa);

\draw [fill=black] (u) circle (\rad) node[left] {$u$};
\draw [fill=black] (v) circle (\rad) node[below] {$v$};
\draw [fill=black] (xa) circle (\rad) node[right] {$x$};
\draw [fill=black] (xb) circle (\rad);

\begin{scope}[yscale=-1,xscale=1]
\draw[looseness=0.75] (-1,1) arc (180:360:1) to[out=90,in=0] (0,3) to[out=180,in=90] (-1,1) -- cycle;  
\end{scope}

\draw (u) edge[bend right=20,line width=1mm,gray] (xa);
\draw (xa) edge[bend left=10,line width=1mm,gray] (xb);
\draw (xb) edge[bend left,dashed] (v);

\node (traila) at ([xshift=4.5cm]xa) {$T_1 = u \xrightarrow[P]{} x \xrightarrow[S_1]{} v$};
\node (trailb) at ([yshift=-1cm]traila) {$T_2 = u \xrightarrow[P]{} x \xrightarrow[S_2]{} v$};
\node (lega) at ([xshift=-0.1cm,yshift=-1cm]xa) {$S_1$};
\node (legb) at ([xshift=2.3cm,yshift=0.5cm]xa) {$S_2$};
\node (puv) at ([xshift=1.5cm,yshift=0.8cm]u) {$P$};

\end{tikzpicture}
\caption{Path $P$ makes its first contact with an odd $(v,v)$-trail.} 
\label{fig:vvtrail} \label{casebfig} 
\end{figure}
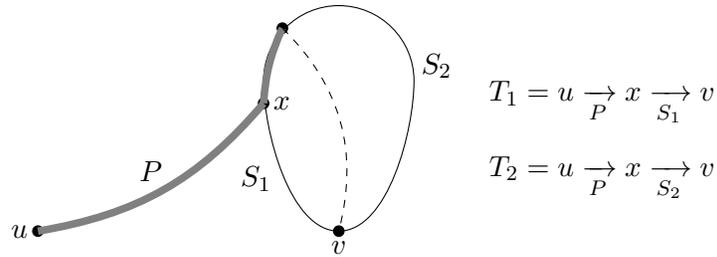

\item Some $P \in \Pc$ makes its last contact with an odd $(u,u)$-trail $T \in \T$. This
is completely symmetric to (b), so a similar strategy works and we satisfy (ii).

\item Paths $P_1,P_2,P_3 \in \Pc$ that make their first contact with an odd
$(u,\uv)$-trail $T\in\T$. Note that all contacts between paths in $\Pc$ and trails in $\T$ 
are edge disjoint, since the paths in $\Pc$ are edge disjoint and the trails in $\T$ are
edge disjoint. For $i=1,2,3$, let the first vertex in the first
contact of $P_i$ (with $T$) be $x_i$. Let $Q_i$ denote the subpath of $P_i$ between $u$
and $x_i$. Note that $T$ is a sequence of edges from $u$ to some vertex in $\uv$.
Without loss of generality, assume that in $T$, 
the first contact of $P_1$ appears before the first contact of $P_2$, which appears before
the first contact of $P_3$. 
The vertices $x_1,x_2,x_3$ partition the trail $T$ into four subtrails $S_0,S_1,S_2,S_3$
(see Fig.~\ref{fig:3contacts}). 
For a trail $X$, we denote the reverse sequence of $X$ by $\overline{X}$. Now consider
the following trails (where $+$ denotes concatenation): 
$$
T_1=S_0+\bQ_1, \quad T_2=Q_1+S_1+\bQ_2, \quad T_3=Q_2+S_2+\bQ_3, \quad
T_4=Q_3+S_3.
$$
Observe that the disjoint union of edges in $T_1,T_2,T_3,$ and $T_4$ has the same parity
as that of $T$, and hence at least one of the $T_i$s 
is an odd trail; call this trail $T'$. 
Let $\T'=\T\cup\{T'\}\sm\{T\}$.
By construction, {\em every} $T_i$ avoids at least one of the (first) contacts made by
$P_1$, $P_2$, or $P_3$ (with $T$). Also, for any other path 
$Q \in \Pc \setminus\{P_1,P_2,P_3\}$, we have $\C(Q,T') \leq \C(Q,T)$. 
Therefore, $\C(\Pc,\T') \leq \C(\Pc,\T)-1$. It could be that $T$ was an odd $(u,v)$-trail,
which is now replaced by an odd $(u,u)$-trail, so $k_{uv}(\T')\geq k_{uv}(\T)-1$.
{So we satisfy (iii).}

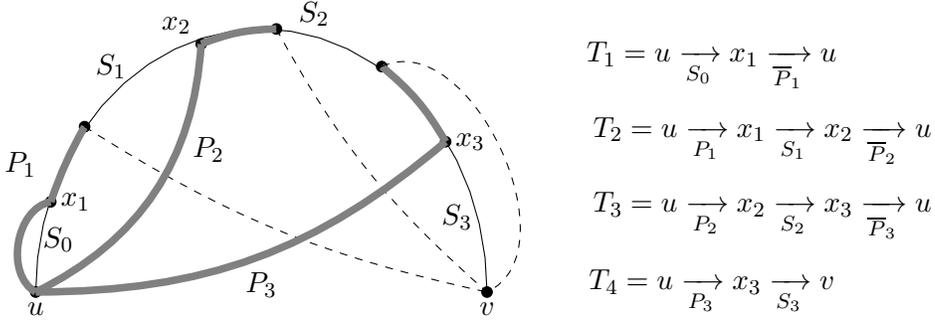
\begin{figure}[ht!]
\centering
\begin{tikzpicture}

\setcounter{xx}{0}
\setcounter{yy}{0}
\coordinate (centre) at (\value{xx},\value{yy});
\node (dummy) at (centre) {};

\coordinate (u) at ([xshift=-3cm,yshift=0cm]centre);
\coordinate (v) at ([xshift=6cm]u);
\coordinate (xa) at ([xshift=0.2cm,yshift=1.2cm]u);
\coordinate (xb) at ([xshift=0.45cm,yshift=1cm]xa);
\coordinate (ya) at ([xshift=2.2cm,yshift=3.3cm]u);
\coordinate (yb) at ([xshift=1cm,yshift=0.2cm]ya);
\coordinate (za) at ([xshift=5.45cm,yshift=2cm]u);
\coordinate (zb) at ([xshift=-0.85cm,yshift=1cm]za);

\draw [fill=black] (u) circle (\rad) node[below] {$u$};
\draw [fill=black] (v) circle (\rad) node[below] {$v$};
\draw [fill=black] (xa) circle (\rad) node[right] {$x_1$};
\draw [fill=black] (xb) circle (\rad);
\draw [fill=black] (ya) circle (\rad) node[above left] {$x_2$};
\draw [fill=black] (yb) circle (\rad);
\draw [fill=black] (za) circle (\rad) node[right] {$x_3$};
\draw [fill=black] (zb) circle (\rad);

\draw [domain=0:180] plot ({3*cos(\x)}, {3.5*sin(\x)});

\draw (u) edge[bend left=70,line width=1mm,gray] (xa);
\draw (xa) edge[bend left=5,line width=1mm,gray] (xb);
\draw (xb) edge[bend right=10,dashed] (v);

\draw (u) edge[bend right=30,line width=1mm,gray] (ya);
\draw (ya) edge[bend left=10,line width=1mm,gray] (yb);
\draw (yb) edge[bend right=10,dashed] (v);

\draw (u) edge[bend right=20,line width=1mm,gray] (za);
\draw (za) edge[bend right=10,line width=1mm,gray] (zb);
\draw (zb) edge[bend left=90,dashed] (v);

\node (pa) at ([xshift=-0.2cm,yshift=1.7cm]u) {$P_1$};
\node (pb) at ([xshift=0.1cm,yshift=-1.4cm]ya) {$P_2$};
\node (pc) at ([xshift=0cm,yshift=0.1cm]centre) {$P_3$};
\node (sa) at ([xshift=0.3cm,yshift=0.7cm]u) {$S_0$};
\node (sb) at ([xshift=-1.2cm,yshift=-0.3cm]ya) {$S_1$};
\node (sc) at ([xshift=1.5cm,yshift=0.4cm]ya) {$S_2$};
\node (sd) at ([xshift=-0.4cm,yshift=1cm]v) {$S_3$};

\node (ta) at ([xshift=4cm,yshift=3cm]v) {$T_1 = u \xrightarrow[S_0]{} x_1 \xrightarrow[\overline{P}_1]{} u$ \gapten};
\node (tb) at ([yshift=-1cm]ta) {$T_2 = u \xrightarrow[P_1]{} x_1 \xrightarrow[S_1]{} x_2 \xrightarrow[\overline{P}_2]{} u$ \gaptwo \gap};
\node (tc) at ([yshift=-1cm]tb) {$T_3 = u \xrightarrow[P_2]{} x_2 \xrightarrow[S_2]{} x_3 \xrightarrow[\overline{P}_3]{} u$ \gaptwo \gap};
\node (td) at ([yshift=-1cm]tc) {$T_4 = u \xrightarrow[P_3]{} x_3 \xrightarrow[S_3]{} v$ \gapten};

\end{tikzpicture}
\caption{Paths $P_1,P_2,P_3$ make their first contact with an odd $(u,v)$-trail.} 
\label{fig:3contacts} \label{casedfig}
\vspace*{-2ex}
\end{figure}

\item Paths $P_1,P_2,P_3 \in \Pc$ make their last contact with an odd $(\{u,v\},v)$-trail
in $\T$. This is completely symmetric to (d) and the same approach works, 
{so we again satisfy (iii).} 
\end{enumerate}
\vspace*{-2ex}
\end{proof}

Theorem~\ref{thm:contacts} now follows by simply applying Lemma~\ref{lem:potential}
starting with the initial collection $\T^0:=\hT$ until conclusion (i) of
Lemma~\ref{lem:potential} applies. 
The $\T'$ returned by this final application of Lemma~\ref{lem:potential} then
{satisfies the theorem statement.}

We now argue that this process terminates in at most $2 \cdot |E(G)|+|\hT|$ steps, which
will conclude the proof. Let $k=|\hT|$. Consider the following potential function defined
on a collection $\T$ of $k$ edge-disjoint odd $(\uv,\uv)$-trails: 
$\phi(\T):=2\cdot \C(\Pc,\T) - k_{uv}(\T)$. 
Consider any iteration where we invoke Lemma~\ref{lem:potential} and move from a
collection $\T$ to another collection $\T'$ with $k_{uv}(\T') < k$. Then, either
conclusion (ii) or (iii) of Lemma~\ref{lem:potential} applies, 
and it is easy to see that $\Phi(\T')\leq \phi(\T)-1$. 
Finally,  we have $-k \leq \Phi(\T) \leq 2 \cdot |E(G)|$ for all $\T$ since 
$0 \leq \C(\Pc,\T) \leq |E(G)|$ 
as the contacts between paths in $\Pc$ and trails in $\T$ are edge-disjoint, so the
process terminates in at most $2|E(G)|+k$ steps.

\section{Proof of Theorem~\ref{thm:sstrails}} \label{sec:sstrails}
Our proof relies on two reductions both involving non-zero $A$-paths in a group-labeled
graph, which we now formally define.
A {\em group-labeled graph} is a pair $(H,\Gm)$, where $\Gm$ is a group, and $H=(N,E')$ is
an oriented graph (i.e., for any $u,v\in N$, if $(u,v)\in E'$ then $(v,u)\notin E'$)
whose arcs are labeled with elements of $\Gm$. All addition (and 
subtraction) operations below are always with respect to the group $\Gm$. 
A path $P$ in $H$ is a sequence $(x_0,e_1,x_1,\ldots,e_r,x_r)$, where the $x_i$s are distinct, 
and each $e_i$ has ends $x_i$, $x_{i+1}$ but could be oriented either way (i.e., as
$(x_i,x_{i+1})$ or $(x_{i+1},x_i)$). 
(So upon removing arc directions, $P$ yields a path in the undirected version of $H$.) 
We say that $P$ traverses $e_i$ in the direction $(x_i,x_{i+1})$.
The {\em $\Gm$-length} (or simply length) of $P$, denoted $\gm(P)$, is the sum of
$\pm\gm_e$s for arcs in $P$, where we count $+\gm_e$ for $e$ if $P$'s traversal of $e$
matches $e$'s orientation and $-\gm_e$ otherwise.
Given $A\sse N$, an $A$-path is a path $(x_0,e_1,\ldots,e_r,x_r)$ where $r\geq 1$, and
$x_0,x_r\in A$; finally, call an $A$-path $P$ a nonzero $A$-path if $\gm(P)\neq 0$ (where
$0$ denotes the identity element for $\Gm$).

Chudnovsky et al.~\cite{ChudnovskyGGGLS06} proved the following theorem as a consequence
of a min-max formula they obtain for the maximum number of vertex-disjoint nonzero
$A$-paths. Subsequently, \cite{ChudnovskyCG08} devised a polytime algorithm to compute   
the maximum number of vertex-disjoint non-zero $A$-paths. Their algorithm also implicitly
computes the quantities needed in (the minimization portion of) their min-max formula to
show the optimality of the collection of $A$-paths they return~\cite{Geelen16}; this in
turn easily yields the vertex-set mentioned in Theorem~\ref{grpthm}.

\begin{theorem}[\cite{ChudnovskyGGGLS06,ChudnovskyCG08}] \label{grpthm}
Let $\bigl(H=(N,E'),\Gm\bigr)$ be a group-labeled graph, and $A\sse V$. Then, for any
integer $k$, one can obtain in polynomial time, either:
\begin{enumerate}[topsep=0.5ex, itemsep=0.5ex]
\item $k$ vertex-disjoint nonzero $A$-paths, or
\item a set of at most $2k-2$ vertices that intersects every nonzero $A$-path.
\end{enumerate}
\end{theorem}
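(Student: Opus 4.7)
The plan is to reduce Theorem~\ref{thm:sstrails} to Theorem~\ref{grpthm} by encoding odd $(s,s)$-trails in $G$ as nonzero $A$-paths in a suitable group-labeled graph $(H,\Gm)$. I would take $\Gm=\grp$, so that parity of length matches being nonzero in the group and orientation of arcs is irrelevant. The set $A$ and the graph $H$ must be chosen so that (i) nonzero $A$-paths in $H$ decode to odd $(s,s)$-trails in $G$; (ii) vertex-disjointness in $H$ implies edge-disjointness in $G$; and (iii) a vertex set hitting all nonzero $A$-paths in $H$ yields an edge set of at most the same size hitting all odd $(s,s)$-trails in $G$. Given these three properties, applying Theorem~\ref{grpthm} with the given $k$ immediately produces the desired dichotomy.

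The gadget I would build is essentially a ``split line graph.'' For every pair $(v,e)$ with $v$ an endpoint of $e$, introduce a vertex $v_e$ (using two distinct copies if $e$ is a loop at $v$). For each edge $e=\{u,v\}$ of $G$, add an arc between $u_e$ and $v_e$ labeled $1$ (an \emph{original} arc). For every $v\in V\sm\{s\}$, put a clique on $\{v_e: e\ni v\}$ with all arcs labeled $0$ (\emph{clique} arcs). Crucially, I would install no clique arcs among $\{s_e: e\ni s\}$, and set $A:=\{s_e:e\ni s\}$. This makes every $A$-vertex have degree $1$ in $H$, so no $A$-vertex can appear as an internal vertex of any path; in particular every $A$-path in $H$ automatically has internal vertices outside $A$.

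The correspondence then works as follows. A nonzero $A$-path $P$ in $H$ must start and end via original arcs (the only arcs incident to $A$), and reading off the original arcs traversed by $P$ in order yields an odd $(s,s)$-trail in $G$: consecutive original arcs share a $G$-endpoint because the clique arcs between them preserve the $v$-coordinate, arcs are distinct because the $v_e$ vertices used by $P$ are distinct, and parity matches the $\grp$-length of $P$. Two vertex-disjoint $A$-paths use disjoint $v_e$ sets, so the corresponding trails are edge-disjoint; hence $k$ vertex-disjoint nonzero $A$-paths yield outcome~(1). For outcome~(2), given a vertex set $X\sse V(H)$ of size at most $2k-2$ hitting every nonzero $A$-path, define $C:=\{e\in E: v_e\in X\text{ for some endpoint }v\text{ of }e\}$, so $|C|\leq|X|\leq 2k-2$. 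To verify that $C$ covers an arbitrary odd $(s,s)$-trail $T$, I would split $T$ at its internal occurrences of $s$ into sub-trails, each visiting $s$ only at its endpoints; since $T$ is odd, at least one such sub-trail $T'$ is odd. Translating $T'$ into $H$ in the natural way gives a nonzero $A$-path $P'$ whose $G$-edge set is exactly that of $T'$, and $X$ hits $P'$ at some $v_e$ with $e\in T'\sse T$, so $C$ intersects $T$.

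The main obstacle is handling odd $(s,s)$-trails that revisit $s$ internally: a naive encoding would force internal vertices of the associated $H$-path to lie in $A$, breaking the structure of the Chudnovsky et al.\ theorem. The two-part design above resolves this: omitting clique arcs at $s$ makes $A$ an independent set of degree-1 vertices so no $A$-path can enter $A$ internally, and on the covering side we always pass to an odd sub-trail that visits $s$ only at its endpoints. A secondary subtlety is that vertex-disjointness in $H$ is \emph{strictly} stronger than edge-disjointness in $G$ (an $A$-path can ``waste'' $v_e$ by using clique arcs at $v$ without traversing $e$), but this only strengthens the packing direction in the way we need. Polynomial-time computability is inherited from the polynomial-time version of Theorem~\ref{grpthm} supplied by~\cite{ChudnovskyCG08,Geelen16}, since $H$ is constructible in polynomial time and the translations in both directions are straightforward.
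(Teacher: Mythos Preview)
Theorem~\ref{grpthm} is a cited black-box result from~\cite{ChudnovskyGGGLS06,ChudnovskyCG08}; the paper does not prove it. What you have written is not a proof of Theorem~\ref{grpthm} but rather a derivation of Theorem~\ref{thm:sstrails} \emph{from} Theorem~\ref{grpthm}. That derivation is exactly what the paper carries out in Section~\ref{sec:sstrails}, so your proposal should be read as an attempt at the proof of Theorem~\ref{thm:sstrails}, and I will compare it against that.

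Your reduction and the paper's are essentially the same: group $\grp$, a split line-graph gadget with a node $v_e$ for each incidence, label~$1$ on the ``original'' edges and label~$0$ on the clique edges, and $A$ equal to the nodes arising from incidences at $s$. The translation lemmas you sketch match the paper's Lemma~\ref{grpcorres} almost verbatim. The one genuine difference is that you \emph{omit} the clique edges at $s$, forcing every vertex of $A$ to have degree~$1$, whereas the paper installs the clique at $s$ as well. Your variant buys you a clean structural guarantee that no $A$-path has an internal vertex in $A$, and you correspondingly handle odd $(s,s)$-trails that revisit $s$ by first splitting at internal occurrences of $s$ and passing to an odd sub-trail. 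The paper instead works with its (slightly permissive) definition of $A$-path, which does not forbid internal $A$-vertices, and maps an arbitrary odd $(s,s)$-trail directly to a nonzero $A$-path (Lemma~\ref{grpcorres}(c)); the covering argument then goes through without an explicit splitting step. Both routes are correct, and the resulting bounds and running time are identical; your version is a touch more defensive about the internal-$A$-vertex issue, while the paper's is a touch more streamlined.
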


Recall that $G$ is the undirected graph in the theorem statement, and $s\in V$.
For a suitable choice of a group-labeled graph $(H,\Gm)$, and a vertex-set $A$, we show
that: (a) 
vertex-disjoint nonzero $A$-paths in $(H,\Gm)$ yield edge-disjoint odd $(s,s)$-trails;   
and
(b) a {\em vertex-set} covering all nonzero $A$-paths in $(H,\Gm)$ yields an odd
$(s,s)$-trail cover of $G$.
Combining this with Theorem~\ref{grpthm} finishes the proof.

Since we are dealing with parity, it is natural to choose $\Gamma=\grp$ (so the
orientation of edges in $H$ will not matter).
To translate vertex-disjointness (and vertex-cover) to edge-disjointness (and edge-cover),
we essentially work with the line graph of $G$, but slightly modify it to incorporate
edge labels. 
We replace each vertex 
$x\in V$ with a clique of size $\deg_G(x)$, with each clique-node corresponding to a
distinct edge of $G$ incident to $x$; we use $[x]$ to denote this clique, both its
set of nodes and edges; the meaning will be clear from the context. (Note that if
$\deg_G(x)=0$, then there are no nodes and edges corresponding to $x$ in $H$; this is fine
since isolated nodes in $G$ can be deleted without affecting anything.) 
For every edge $e=xy\in E$, we create an edge between the clique nodes of $[x]$ and $[y]$
corresponding to $e$. We arbitrarily orient the edges to obtain $H$. We give each clique
edge a label of $0$, and give every other edge a label of $1$. 
Finally, we let $A=[s]$. 

\begin{lemma} \label{grpcorres}
The following properties hold.
\begin{enumerate}[(a), topsep=0.5ex, itemsep=0.5ex]
\item Every $A$-path $P$ in $H$ maps to an $(s,s)$-trail $T=\pi(P)$ in $G$ such that $\gm(P)=1$
iff $T$ is an odd trail.
\item If two $A$-paths $P, Q$ are vertex disjoint then the $(s,s)$-trails $\pi(P)$ and
$\pi(Q)$ are edge disjoint.  
\item Every $(s,s)$-trail $T$ in $G$ with at least one edge maps to an $A$-path $P=\sg(T)$
in $G$ such that: $T$ is an odd trail iff $\gm(P)=1$, and $P$ contains a vertex $x$ iff
$T$ contains the corresponding edge of $G$.
\end{enumerate}
\end{lemma}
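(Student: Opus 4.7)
The plan is to define the two maps $\pi$ and $\sigma$ explicitly, exploiting the simple edge structure of $H$: each edge is either a \emph{clique edge} (labeled $0$, internal to some clique $[x]$) or an \emph{original edge} (labeled $1$, joining the two clique nodes associated with a unique edge of $G$), and original edges of $H$ are in natural bijection with edges of $G$.

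For part (a), I would let $P = (c_0, d_1, c_1, \ldots, d_r, c_r)$ be an $A$-path with $c_0, c_r \in [s]$, and write $c_i \in [x_i]$. The key observation is that $x_{i-1} = x_i$ whenever $d_i$ is a clique edge, while if $d_i$ is an original edge then it corresponds to an edge of $G$ with endpoints $x_{i-1}$ and $x_i$. I then define $\pi(P)$ by deleting the clique-edge steps: if $i_1 < \cdots < i_m$ are the positions of the original edges of $P$, set $\pi(P) := (x_0, e_{i_1}, x_{i_1}, \ldots, e_{i_m}, x_{i_m})$, where $e_{i_j}$ is the edge of $G$ corresponding to $d_{i_j}$. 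This sequence begins and ends at $s$ (since $c_0, c_r \in [s]$ and cliques are constant across runs of clique edges), and its edges are distinct because the $d_{i_j}$'s are distinct ($P$ is a simple path) and original edges of $H$ correspond injectively to edges of $G$; hence $\pi(P)$ is an $(s,s)$-trail. The parity claim then follows because in $\grp$ we have $\gm(P) = m \bmod 2$, which equals the parity of $|\pi(P)| = m$. Part (b) is then immediate: vertex-disjoint $A$-paths use disjoint edge sets in $H$, in particular disjoint original edges, so $\pi(P)$ and $\pi(Q)$ are edge-disjoint in $G$.

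For part (c), given $T = (y_0 = s, f_1, y_1, \ldots, f_k, y_k = s)$ with $k \geq 1$, let $a_i \in [y_{i-1}]$ and $b_i \in [y_i]$ denote the two clique nodes associated with $f_i$, and let $d_i$ be the original edge of $H$ joining them. I define $\sigma(T) = P$ as the walk in $H$ that alternates the original edge $d_i$ from $a_i$ to $b_i$ with the clique edge from $b_i$ to $a_{i+1}$ inside $[y_i]$ (well-defined since $f_i \neq f_{i+1}$ forces $b_i \neq a_{i+1}$), and ends at $b_k \in [s]$. The principal obstacle is verifying that $P$ is genuinely a simple path in $H$: two clique nodes in the same clique coincide iff they correspond to the same edge of $G$, so since the $f_i$'s are pairwise distinct (as $T$ is a trail), the nodes $a_1, b_1, \ldots, a_k, b_k$ are all distinct. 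Hence $\sigma(T)$ is an $A$-path, its $\grp$-length $\gm(P) = k \bmod 2$ matches the parity of $T$, and the nodes of $P$ are precisely the clique nodes whose associated $G$-edge lies in $T$, which yields the vertex-edge correspondence.
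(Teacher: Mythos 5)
Your construction is correct and follows essentially the same route as the paper: contract the $0$-labeled clique edges to map an $A$-path to an $(s,s)$-trail, and conversely interleave the $1$-labeled edges with clique edges, with distinctness of nodes/edges following from the bijection between $1$-labeled edges of $H$ and edges of $G$. The only (trivial) omission is the degenerate case in part (a) where $P$ contains no original edges, for which the paper explicitly sets $\pi(P)$ to be the trivial trail $(s)$.
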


\begin{proof} 
The proof is straightforward.
By definition, the 1-labeled edges of $H$ are in bijective correspondence with the edges
of $G$. 

For part (a), let $P=(x_0,e_1,\ldots,e_r,x_r)$ be an $A$-path in $H$. 
Let $P'=(e_{i_1},\ldots,e_{i_q})$ be the subsequence of $P$ consisting of the
1-labeled edges of $P$. If $P'=()$, then define $\pi(P)$ to be the trivial $(s,s)$-trail
$(s)$, which satisfies the required property. 
Let $f_j$ be the edge of $G$ corresponding to $e_{i_j}$.
Then, for $j=2,\ldots,q-1$, we have $f_j=u_{j-1}u_j$, where $u_{j-1}$ is such that
$[u_{j-1}]$ contains one end of both $e_{i_j}$ and $e_{i_{j-1}}$, and $u_j$ is such that
$[u_j]$ contains one end of both $e_{i_j}$ and $e_{i_{j+1}}$.
The end of $e_{i_1}$ not in $[u_1]$ must lie
in $[s]$ (as all edges of $P$ occurring before $e_{i_1}$ have label 0, and so must be
edges of $[s]$). Similarly, the end of $e_{i_q}$ that does not lie in $[u_{q-1}]$ must
lie in $[s]$ (as all edges of $P$ occurring after $e_{i_q}$ must be edges of $[s]$).
Therefore, the sequence $f_1,\ldots,f_q$ yields an $(s,s)$-trail $\pi(P)$ in $G$, and we 
have $|\pi(P)|=|P'|=\gm(P)$.
Observe that an edge $f$ of $G$ lies in $\pi(P)$ iff the corresponding 1-labeled edge of
$H$ lies in $P$.

For part (b), 
let $T=\pi(P)$, $T'=\pi(Q)$ be the corresponding $(s,s)$-trails. We may assume that $T$
and $T'$ contain at least one edge, otherwise the statement is vacuously true. 
Since $P$ and $Q$ are vertex disjoint, their subsequences of 1-labeled edges, which map to
the edges of $T$ and $T'$, do not contain any common edges, so $T$ and $T'$ are edge
disjoint. 

For part (c), consider an $(s,s)$-trail $(u_0=s,f_1,\ldots,f_r,u_r=s)$, where $r\geq 1$. Each
$f_i$ maps to a distinct 1-labeled edge $e_i$ of $H$. The edges $e_i$ and $e_{i+1}$ both
have one end incident to a distinct node in $[u_i]$, and there is a clique edge joining
these ends. Since the $f_i$s are all distinct and no two 1-labeled edges of $H$ share an
endpoint, the ends of all the $e_i$s are distinct. 
Let $x_0$ be the end of $e_1$ in $[u_0]=[s]$, and $x_r$ be the end of $e_r$ in
$[u_r]=[s]$. Note that $x_0\neq x_r$ since $e_1\neq e_r$. 
So the edges $e_i$ interspersed with the clique edges from $[u_i]$ yield an
$(x_0,x_r)$-path $\sg(T)$ in $H$, which is therefore an $A$-path.
Also, $\gm(P)=r\bmod 2$, so $P$ is a nonzero $A$-path iff $T$ is an odd trail.
This proves part (ii). By construction, we ensure that a node $x$ lies in $P$ iff the 
corresponding edge $f$ of $G$ lies in $T$. 
\end{proof}

To complete the proof of Theorem~\ref{thm:sstrails}, we apply Theorem~\ref{grpthm} to 
the nonzero $A$-path instance $(H,[s],\gm,\grp)$ constructed above. If we obtain $k$
vertex-disjoint nonzero $A$-paths in $H$, then parts (a) and (b) of Lemma~\ref{grpcorres}
imply that we can map these to $k$ edge-disjoint odd $(s,s)$-trails.
Alternatively, if we obtain a set $C$ of at most $2k-2$ vertices of $H$ that intersect
every nonzero $A$-path, then we obtain a cover $F$ for odd $(s,s)$-trails in $G$ 
by taking the set of edges in $G$ corresponding to the vertices in $C$. 
To see why $F$ is a cover, suppose that the graph $G-F$ has an odd $(s,s)$-trail. This
then maps to a nonzero $A$-path $P$ in $H$ such that $P\cap C=\es$ by part (c) of
Lemma~\ref{grpcorres}, which yields a contradiction.

\section{Min-max theorem for packing odd edge-disjoint \boldmath $(s,s)$-trails} \label{minmax} 
Let $G$ be an undirected graph. For a node $s\in V(G)$, recall that $\nu(s,s;G)$
denotes the maximum number of edge-disjoint odd $(s,s)$-trails in $G$. Let $\comp(G)$
denote the set of components of $G$. Given disjoint vertex sets $S, T$, we use $E_G(S,T)$
to denote the set of edges in $G$ with one end in $S$ and one end in $T$. 
Let $E_G(S)$ be the set of edges with both ends in $S$. We drop the subscript $G$ above
when it is clear from the context. 
We prove the following min-max theorem. 

\begin{theorem} \label{minmaxthm}
Let $G$ be an undirected graph, and $s$ be a node in $V(G)$. Then, 
\begin{equation}
\nu(s,s;G) = \min \biggl\{
|E(S)-F|+\negthickspace\sum_{C\in\comp(G-S)}\floor{\frac{|E(S,C)|}{2}}: 
\ \ s\in S,\ (S,F)\text{ is a  bipartite subgraph of }G\biggr\}. 
\label{minmaxeqn} 
\end{equation}
\end{theorem}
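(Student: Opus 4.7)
The plan is to establish the two directions of \eqref{minmaxeqn} separately. The direction $\nu(s,s;G)\leq\text{RHS}$ follows from a short charging argument. The reverse direction, showing that the minimum on the RHS is attained by some $(S^*,F^*)$ with cost equal to $\nu(s,s;G)$, is the main work, and will be obtained by translating the min-max theorem of Chudnovsky et al.~\cite{ChudnovskyGGGLS06} for vertex-disjoint nonzero $A$-paths, applied to the gadget graph $(H,\grp)$ of Section~\ref{sec:sstrails}.

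\textbf{Upper bound.} Fix any bipartite subgraph $(S,F)$ with $s\in S$ and any collection $\T$ of edge-disjoint odd $(s,s)$-trails in $G$. The plan is to charge each $T\in\T$ to either a single edge of $E(S)\sm F$ used by $T$, or to a single component $C\in\comp(G-S)$ using two edges of $E(S,C)$ traversed by $T$. The bipartition of $(S,F)$ yields a $2$-coloring of $S$ under which every $F$-edge is parity-reversing, so any $(s,s)$-walk that stays in $S$ and uses only $F$-edges has even length. Hence an odd $(s,s)$-trail must either use an edge of $E(S)\sm F$ or leave $S$; and any excursion into a component $C$ of $G-S$ uses at least two edges of $E(S,C)$ (an entry and an exit). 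Edge-disjointness of $\T$ then gives that the number of trails charged to $E(S)\sm F$ is at most $|E(S)\sm F|$ and the number charged to any $C$ is at most $\floor{|E(S,C)|/2}$, yielding $|\T|\leq|E(S)\sm F|+\sum_{C\in\comp(G-S)}\floor{|E(S,C)|/2}$.

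\textbf{Attainment.} By Lemma~\ref{grpcorres}, $\nu(s,s;G)$ coincides with the maximum number of vertex-disjoint nonzero $[s]$-paths in $(H,\grp)$. The min-max theorem of~\cite{ChudnovskyGGGLS06} expresses this quantity as the minimum, over an appropriate ``blocker'' $(X,\mathcal{U})$ in $H$ (a vertex set $X\sse V(H)$ together with a structure on the components of $H-X$ that bounds the number of zero-labeled crossings), of an expression of the form $|X|+\sum_{U\in\mathcal{U}}\floor{b_U/2}$. Given an optimal such blocker $(X^*,\mathcal{U}^*)$, I will read off $(S^*,F^*)$ in $G$ by the natural correspondence of the gadget: $S^*$ will consist of those $v\in V(G)$ whose clique $[v]$ in $H$ is engaged by $X^*$, $F^*\sse E(S^*)$ will be the set of $G$-edges whose corresponding $1$-labeled $H$-edge connects oppositely-colored clique nodes under the $2$-coloring induced by the zero-labeled structure of the blocker's surviving components, and the components of $G-S^*$ will correspond to the pieces of $\mathcal{U}^*$. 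One then checks $|X^*|=|E(S^*)\sm F^*|$ and, for each $U\in\mathcal{U}^*$ matched to the corresponding component $C_U$ of $G-S^*$, $\floor{b_U/2}=\floor{|E(S^*,C_U)|/2}$.

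\textbf{Main obstacle.} The crux is the translation itself. One first has to show that an optimal blocker in $H$ can be placed in a normal form where, for each $v\in V(G)$, the clique $[v]$ is either entirely in $X^*$ or entirely outside $X^*$ -- otherwise $S^*$ is not well-defined on the $G$-side. Second, one has to verify that the zero-label structure on the surviving components of $H-X^*$ really encodes a bipartition of $S^*$, with $F^*$ as the cross-color edges. Finally, one has to check that the $\floor{b_U/2}$ summands of~\cite{ChudnovskyGGGLS06}'s formula translate precisely into $\floor{|E(S^*,C)|/2}$ summands on the $G$-side. This combinatorial bookkeeping -- the ``dovetailing'' of Chudnovsky et al.'s argument that was alluded to in Section~\ref{sec:main} -- is where the real content of Section~\ref{minmax} lies. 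Once it is carried out, the cost of $(S^*,F^*)$ matches $\nu(s,s;G)$, and combined with the upper bound above this gives \eqref{minmaxeqn}.
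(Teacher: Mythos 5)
Your upper-bound argument is correct and matches the paper's: a closed trail confined to the bipartite graph $(S,F)$ has even length, so every odd $(s,s)$-trail either uses an edge of $E(S)\sm F$ or enters some component $C$ of $G-S$ through at least two edges of $E(S,C)$, and edge-disjointness gives the bound.

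The attainment direction, however, is where all the content of the theorem lies, and your proposal both leaves it unexecuted and sketches a dictionary that does not check out. You propose to take $S^*$ to be the vertices whose cliques are ``engaged by'' the blocker's vertex-set $X^*$ and to verify $|X^*|=|E(S^*)\sm F^*|$. Neither identification is the right one. Nodes of $H$ correspond to edge-ends of $G$, so $|X^*|$ counts edge-ends and there is no reason for it to equal $|E(S^*)\sm F^*|$; and in the correct translation the set $X$ plays no role at all in defining $S$. What actually has to be shown (the paper's Lemma~\ref{lem:emptyx}) is that a tight blocker $(X,A'_0,A'_1)$ can be normalized so that $X=\es$ \emph{and} each clique $[x]$ lies entirely inside $A'_0$, entirely inside $A'_1$, or entirely outside $A'_0\cup A'_1$; this requires a genuine argument in two stages (first making cliques uniform, then expelling the vertices of $X$ one at a time), each step controlled by an inequality of the form $\sum_i\floor{r_i/2}\geq\floor{(\sum_i r_i-(q-1))/2}$ governing how the floor-sum behaves when components merge. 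Once that normal form is in hand, $S$ is read off as the set of $x$ with $[x]\sse A'_0$ or $[x]\sse A'_1$, the bipartition of $S$ is given by which cliques had their labels switched, $F=E_G(S_0,S_1)$ is the set of cross edges (whose $H$-images become $0$-labeled and are deleted), and --- crucially --- the term $|E(S)\sm F|$ arises not from $|X|$ but from the $1$-labeled $H$-edges with both ends in $A'_0$ (or both in $A'_1$), each of which survives as an isolated size-$2$ component of the residual graph and contributes $\floor{2/2}=1$ to the floor-sum. Your proposal correctly identifies that a normal form and a component-by-component accounting are needed, but supplies neither, and the specific correspondence you do commit to would lead the verification astray.
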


In the sequel, we fix $G$ and $s$ to be the graph and the node mentioned in the statement
of Theorem~\ref{minmaxthm}. It is easy to see that for any bipartite subgraph $(S,F)$ with
$s\in S$, the expression in the RHS of \eqref{minmaxeqn} gives an upper
bound on $\nu(s,s;G)$: an odd $(s,s)$-trail either lies completely within $S$ and must
therefore use some edge of $E(S)\sm F$, or must include a vertex of some component
$K\in\comp(G-S)$; there are at most $E(S)\sm F$ edge-disjoint trails of the first type,
and at most $\floor{\frac{|E(S,K)|}{2}}$ edge-disjoint trails that may use a vertex of a
component $K$. 
Our goal is to show that for some choice of $(S,F)$, the expression in the RHS of
\eqref{minmaxeqn} equals $\nu(s,s;G)$. 

Let $(H,[s],\gm:E(H)\mapsto\{0,1\},\grp)$ be the nonzero $A$-path packing instance
obtained from $(G,s)$ as described in Section~\ref{sec:sstrails}. Let $\nu(H,[s],\gm)$
denote the maximum number of vertex-disjoint non-zero $[s]$-paths in $H$ under the
labeling $\gm$, so $\nu(s,s;G)=\nu(H,[s],\gm)$.    
As mentioned earlier, we obtain our min-max theorem by applying the min-max formula
of Chudnovsky et al.~\cite{ChudnovskyGGGLS06} for packing vertex-disjoint nonzero
$A$-paths in group-labeled graphs to $(H,[s],\grp)$, 
and simplifying the resluting expression by leveraging the structure underlying the gadget
graph $H$. 
Our starting point therefore is the following result obtained by
specializing Theorem 1.2 in~\cite{ChudnovskyGGGLS06} to the $A$-path instance
$(H,[s],\gm,\grp)$. We need the following notation from~\cite{ChudnovskyGGGLS06}.
Given $y\in V(H)$, {\em switching} the labeling $\gm$ at $y$ means that we flip the labels
of the edges incident to $y$; that is, we obtain a new labeling $\gm'$ where
$\gm'_e=1-\gm_e$ if $e$ is incident to $y$ and $\gm'_e=\gm_e$ otherwise. 
Note that if $y\notin[s]$, then $\nu(H,[s],\gm)=\nu(H,[s],\gm')$.
For $A'\sse V(H)$ and a labeling $\gm'$, let $E(A',\gm')$ denote the set of all edges
$e\in E(H)$ with both ends in $A'$ and having $\gm'_e=0$.

\begin{theorem}[Corollary of Theorem 1.2 in~\cite{ChudnovskyGGGLS06}] 
\label{thm:groupminmax} \label{grpminmaxthm}
There exists a labeling $\gm':E(H)\mapsto\{0,1\}$ obtained by switching $\gm$ at some
(suitably chosen) vertices in $V(H)-[s]$, and vertex-sets $X, A' \subseteq V(H)$ with
$[s]-X \subseteq A' \subseteq V(H)-X$ such that
\begin{equation}
\nu(H,[s],\gamma) = |X| + \sum_{K \in \comp(H-X-E(A',\gm'))} \floor{ \frac{|A' \cap V(K)|}{2} }.
\label{grpminmaxeqn}
\end{equation}
\end{theorem}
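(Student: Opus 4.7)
The plan is to derive Theorem~\ref{grpminmaxthm} as a direct specialisation of Theorem 1.2 in~\cite{ChudnovskyGGGLS06} to the group-labelled graph $(H,\grp)$ with terminal set $[s]$. The general theorem already furnishes a min-max formula for $\nu(H,A,\gm)$ in an arbitrary group-labelled graph, so essentially no new combinatorics is required: the task is a bookkeeping exercise matching the general dual expression to \eqref{grpminmaxeqn}.

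First I would recall Theorem 1.2 of~\cite{ChudnovskyGGGLS06} in roughly the following form. For any group-labelled graph $(H,\Gm)$ and terminal set $A \sse V(H)$, $\nu(H,A,\gm)$ equals the minimum, over all triples $(\gm',X,A')$ where (i) $\gm'$ is obtained from $\gm$ by a sequence of switchings performed at vertices of $V(H)-A$, (ii) $X \sse V(H)$, and (iii) $A-X \sse A' \sse V(H)-X$, of the quantity $|X| + \sum_{K} \floor{|A' \cap V(K)|/2}$, where $K$ ranges over the components of the graph obtained from $H-X$ by deleting the $\gm'$-zero edges having both ends in $A'$. The fact that switching at non-terminal vertices preserves $\nu(H,A,\gm)$ is a standard observation: switching at $y\notin A$ defines a bijection on $A$-paths that preserves the property of having nonzero $\Gm$-length, so replacing $\gm$ by the switched labelling does not alter the packing number.

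Second, I would specialise the above to $\Gm=\grp$ and $A=[s]$. In $\grp$ the identity element is $0$, arc orientations are immaterial to the parity of a path's label-sum, and switching at a vertex $y$ simply flips the labels of all edges incident to $y$. Consequently, the set of ``$\gm'$-zero edges having both ends in $A'$'' is exactly $E(A',\gm')$ as defined in the paragraph preceding the theorem, and the components used in the general formula coincide with those of $H-X-E(A',\gm')$ appearing in~\eqref{grpminmaxeqn}. The containment constraint $A - X \sse A' \sse V(H)-X$ translates verbatim into $[s]-X \sse A' \sse V(H)-X$, and the resulting triple $(\gm',X,A')$ is precisely the object asserted to exist in Theorem~\ref{grpminmaxthm}.

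The only point that needs a brief sanity check, and the main obstacle, is that Theorem 1.2 of~\cite{ChudnovskyGGGLS06} applies without any stability or independence assumption on $A$, since in our setting $A=[s]$ is a clique of $H$ rather than an independent set. Inspecting the statement in~\cite{ChudnovskyGGGLS06} (and the algorithmic companion in~\cite{ChudnovskyCG08,Geelen16}) confirms that arbitrary $A\sse V(H)$ is permitted, so no auxiliary transformation of the clique $[s]$ is needed. With this confirmed, \eqref{grpminmaxeqn} is simply the specialised form of the general min-max equality, and the proof is complete modulo invocation of Theorem 1.2.
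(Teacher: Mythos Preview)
Your proposal is correct and matches the paper's approach: the paper does not give a separate proof of this statement but simply presents it as a direct specialisation of Theorem~1.2 in~\cite{ChudnovskyGGGLS06} to the instance $(H,[s],\gm,\grp)$, which is precisely the bookkeeping you outline.
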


We may assume that the labeling $\gm'$ given above is obtained by switching $\gm$ at some
subset of $A'-[s]$, since we can always switch $\gm'$ at vertices not in $A'$ without
affecting the RHS of \eqref{grpminmaxeqn}. It will be convenient to restate
Theorem~\ref{grpminmaxthm} by explicitly referring to the subset of $A'$ at which $\gm$
was switched, as follows. 
Let $Y, B_0, B_1 \subseteq V(H)$ be disjoint sets such that 
$[s]-Y\sse B_0$ and $B_0 \cup B_1\subseteq V(H)-Y$; we call $(Y,B_0,B_1)$ a
valid triple. Let $\gamma(B_1)$ denote the labeling obtained from $\gamma$ by switching at
every vertex in $B_1$. Let $H(Y,B_0,B_1):=H-Y- E(B_0 \cup B_1,\gamma(B_1))$. Define  
$$
p(Y,B_0,B_1) := |Y| + \sum_{K\in\comp(H(Y,B_0,B_1))} \floor{ \frac{|(B_0 \cup B_1) \cap V(K)|}{2} }.
$$
As noted in~\cite{ChudnovskyGGGLS06}, it is not hard to see that 
$\nu(H,[s],\gm)\leq p(Y,B_0,B_1)$ for any valid triple $(Y,B_0,B_1)$. We call a valid 
triple $(Y,B_0,B_1)$ \emph{tight} if $\nu(H,[s],\gm)=p(Y,B_0,B_1)$. 
Let the labeling $\gm'$ given by Theorem~\ref{grpminmaxthm} be obtained by switching $\gm$ 
at $A'_1\sse A'$. Let $A'_0:=A'-A'_1$. Theorem~\ref{thm:groupminmax} then tells us that
$(X,A'_0,A'_1)$ is a tight triple. Exploiting the structure of $H$, we show that there
exists a tight triple of a convenient form, which will then lead to our min-max formula.

\begin{lemma} \label{lem:emptyx}
There exists a tight triple $(\es,B_0,B_1)$ such that, for every $x\in V(G)$, we have
\begin{equation} 
[x] \subseteq B_0 \quad \text{or} \quad [x] \subseteq B_1 \quad \text{or} \quad
[x]\cap(B_0\cup B_1)=\es. \label{eq:emptyx}
\end{equation}
\end{lemma}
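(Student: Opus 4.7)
The plan is to start with the tight triple $(X, A'_0, A'_1)$ supplied by Theorem~\ref{thm:groupminmax} (applied to $(H, [s], \gm, \grp)$), and to transform it into a tight triple of the desired form via a sequence of local modifications, each of which preserves the value $p(\cdot)$. Since $p(Y, B_0, B_1) \geq \nu(H, [s], \gm)$ holds for every valid triple, any modification that does not strictly increase $p$ preserves tightness automatically.

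I would proceed in two phases. In the first phase, I drain $Y$: starting from $(X, A'_0, A'_1)$ with $Y = X$, I repeatedly pick a vertex $v \in Y$ lying in some clique $[x]$ and consider three candidate reassignments: moving $v$ into $B_0$, moving $v$ into $B_1$, or dropping $v$ out of $Y \cup B_0 \cup B_1$ entirely. (If $v \in [s]$, the validity constraint $[s]-Y \subseteq B_0$ forces $v$ to land in $B_0$, so only the first option is available.) The goal is to show that at least one of these candidates leaves $p$ unchanged. Since dropping $v$ from $Y$ contributes $-1$ to the first term of $p$, one must argue that the second (component) term increases by at most $1$. The key fact is that $v$ has very restricted connectivity in $H$: exactly one external edge of label $1$ to a node in an adjacent clique, together with $|[x]|-1$ clique-internal edges of label $0$. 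This limited interaction lets one bound the change in the component sum by inspection.

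In the second phase, once $Y = \emptyset$, I enforce clique uniformity clique-by-clique. Pick a clique $[x]$ violating \eqref{eq:emptyx}, i.e., $[x]$ is split nontrivially among $B_0$, $B_1$, and $V(H) \setminus (B_0 \cup B_1)$. I would consider the canonical consolidations---push all of $[x]$ into $B_0$, push all of $[x]$ into $B_1$, or, if $x \neq s$, remove $[x]$ from $B_0 \cup B_1$ entirely---and show that at least one of them does not raise $p$. The driving observation is that inside $[x]$ every edge has label $0$ and every clique-vertex has exactly one external edge, so the effect of a consolidation on $H(\emptyset, B_0, B_1)$ is confined to components that touch $[x]$. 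A short averaging argument over the two or three candidate consolidations then suffices: the average of their effects on $p$ is non-positive, so some candidate works.

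The main obstacle is establishing the local-exchange claim in each phase, namely proving that among the small set of candidate modifications at least one preserves $p$. This amounts to a combinatorial accounting of how components of $H(Y, B_0, B_1)$ merge or split when a single vertex (phase~1) or a single clique (phase~2) is relocated among $Y$, $B_0$, $B_1$, and the outside. I expect the phase-2 accounting to be the more delicate part, since consolidating a clique simultaneously removes the internal clique-edges, can toggle external cross-clique edges in or out of $E(B_0 \cup B_1, \gamma(B_1))$, and interacts non-linearly with the $\lfloor \cdot /2 \rfloor$ rounding in the component sum; a careful averaging over the candidate consolidations is what I expect to absorb this non-linearity.
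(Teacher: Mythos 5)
Your high-level strategy---local exchanges that never increase $p$, exploiting the special structure of the gadget graph---matches the paper's, but the two exchange claims that carry the entire content of the lemma are left as assertions, and the mechanisms you propose for them do not suffice. For phase 2, the averaging argument is the real problem: there is no reason the average effect of your three consolidations on $p$ should be non-positive. Pushing a split clique $[x]$ entirely into $B_0$ (or $B_1$) adds the formerly-outside nodes of $[x]$ to the counts $|V(K)\cap(B_0\cup B_1)|$ and can strictly increase the component sum, and the floors make the three effects non-additive, so "some candidate works because the average does" is unsupported. What actually works---and what the paper proves---is that the single candidate "eject $[x]$ from $B_0\cup B_1$ entirely" never increases $p$. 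That proof rests on two structural facts you do not identify: (i) a clique violating uniformity already lies in a \emph{single} component of $H(Y,B_0,B_1)$, because the only deleted clique-internal edges are those with both ends in $B_0$ or both ends in $B_1$; and (ii) every newly reinstated edge that merges two distinct components has exactly one end in $[x]\cap(B_0\cup B_1)$, and each such node has only one non-clique edge, so at most $1+|[x]\cap(B_0\cup B_1)|$ components merge while the merged component's count drops by exactly $|[x]\cap(B_0\cup B_1)|$; Claim~\ref{basic} then absorbs the loss.

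Your phase-1 accounting is likewise incomplete: "the component term increases by at most $1$" is false for an arbitrary choice among the three destinations. If, say, $[x]-Y\sse B_1$ and you move $v$ into $B_0$ or drop it outside, then $v$'s retained clique edges run to every node of $[x]-Y$, and since those nodes have all their mutual clique edges deleted they may sit in pairwise distinct components; fusing $q$ of them can raise the floor sum by about $q/2$, not $1$. The move must be chosen to match the clique's type (into $B_0$ if $[x]-Y\sse B_0$, into $B_1$ if $[x]-Y\sse B_1$, outside otherwise), and in the non-uniform case one again needs fact (i) to see that only two components can merge. This is precisely why the paper runs the phases in the opposite order: it first makes every clique uniform outside $X$, and only then reinstates the $X$-vertices one at a time, at which point each reinstated vertex merges at most two components and the drop in $|Y|$ pays for it. Your order can be rescued, but only by supplying the same two observations; as written, the proposal does not contain the arguments that make either phase go through.
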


Before proving Lemma~\ref{lem:emptyx}, we show how this yields Theorem~\ref{minmaxthm}.

\begin{proofof}{Theorem~\ref{minmaxthm}}
Let $(\es,B_0,B_1)$ be the tight triple given by Lemma~\ref{lem:emptyx}. Let
$\tgm=\gm(B_1)$.
Take $S=S_0\cup S_1$, where $S_0 := \{x \in V(G): [x] \subseteq B_0\}$ and 
$S_1 := \{x \in V(G): [x]\subseteq B_1\}$. Note that $s \in S_0$. Let 
$F :=E_G(S_0,S_1)$. Let $C_1,\dots,C_{\ell}$ be the components of $G-S$. We show that 
$p(\es,B_0,B_1)=|E(S)-F|+\sum_{i=1}^\ell\floor{\frac{|E(S,C_i)|}{2}}$, which together with
the fact that $(\es,B_0,B_1)$ is a tight triple, completes the proof.

Observe that for every vertex $x \in S$, all the clique edges of
$H$ in $[x]$ have $\tgm$-label 0 (since $\gm$ is switched at either both endpoints or no
endpoint) and thus belong to $E(B_0 \cup B_1,\gamma(B_1))$. Therefore such 
edges do not appear in $\tH:=H(\es,B_0,B_1)$. Also, for every edge $e = xy \in F$, there is
a corresponding edge $e$ in $H$ between some vertex $w$ in $B_0$ and some vertex $z$ in
$B_1$, and $\tgm_e=0$; so $e\notin E_{\tH}$. Next, for every 
$e = xy \in E(S)-F=E(S_0) \cup E(S_1)$, there is a corresponding edge in $\tH$
between some vertex $w\in [x]$ and some vertex $z\in[y]$, and this edge has $\tgm$-label
1. Thus, $\{w,z\}$ is a connected component of $\tH$ of size $2$, and contributes $1$ to
$p(\emptyset,B_0,B_1)$. 

We now argue that components $\{C_i\}_{i=1}^{\ell}$ are in bijection with the remaining
components of $\tH$, and if $C_i$ corresponds to component $K_i$ of $\tH$, then
$K_i$ contributes $\floor{\frac{|E(S,C_i)|}{2}}$ to $p(\es,B_0,B_1)$.
Consider any component $C_i$. Let $E(S,C_i)=\{x_1 y_1, \dots, x_r y_r\}$, 
where $x_j \in S$ and $y_j \in C_i$ for all $j=1,\ldots,r$. 
For $j=1,\dots,r$, let $w_j\in[x_j]$ be the end in $B_0\cup B_1$ of the unique edge in
$\tH$ corresponding to edge $x_jy_j$. 
For every $y_j$, all vertices of $H$ in $[y_j]$ belong to the same component of $\tH$.
So since $y_1,\ldots,y_r$ are connected in $G-S$, 
all vertices in $\bigcup_{j=1}^r [y_j]$ belong to the same component of
$\tH$, say $K_i$. It follows that
$K_i\supseteq\{w_1,\ldots,w_r\}\cup\bigl(\bigcup_{j=1}^r[y_j]\bigr)$. 
Thus we have, 
$$
\floor{\frac{|E(S,C_i)|}{2}}=\floor{\frac{r}{2}}=\floor{\frac{|V(K_i) \cap (B_0\cup B_1)|}{2}}.
$$

Moreover, since $G-S=\bigcup_{i=1}^\ell C_i$, it follows that $\tH$ has no components
other than $K_1,\ldots,K_\ell$. So to summarize, we have shown that 
$p(\es,B_0,B_1)=|E(S)-F|+\sum_{i=1}^\ell\floor{\frac{|E(S,C_i)|}{2}}$, completing the
proof. 
\end{proofof}

We now prove Lemma~\ref{lem:emptyx}. We will need the following claim.

\begin{claim} \label{basic}
For any integer $q \geq 2$ and nonnegative integers $r_1,\dots,r_{q}$ we have 
\begin{equation}\label{eq:floor1}
\sum_{i=1}^{q} \floor{ \frac{r_i}{2} } \geq \floor{ \frac{\sum_{i=1}^{q} r_i - (q-1)}{2} } \ .
\end{equation}
\end{claim}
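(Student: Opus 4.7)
The plan is to reduce the claim to a trivial parity statement by separating each $r_i$ into its even part and its parity bit. Write $r_i = 2\lfloor r_i/2\rfloor + \epsilon_i$ with $\epsilon_i\in\{0,1\}$, and set $E := \sum_{i=1}^{q}\epsilon_i$. Since $\epsilon_i\le 1$ for each $i$, we have $E\le q$, equivalently $E-(q-1)\le 1$.

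Substituting into the right-hand side of \eqref{eq:floor1} and pulling the integer term $\sum_i\lfloor r_i/2\rfloor$ out of the floor gives
\[
\floor{\frac{\sum_{i=1}^{q}r_i - (q-1)}{2}}
= \sum_{i=1}^{q}\floor{\frac{r_i}{2}} + \floor{\frac{E-(q-1)}{2}}.
\]
Since $E-(q-1)\le 1$, the last floor is at most $0$, which yields the desired inequality. Note that the hypothesis $q\ge 2$ is not actually needed for the argument; the bound $E\le q$ suffices.

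The proof has no real obstacle: the only thing to double-check is that $\lfloor a + b\rfloor = a + \lfloor b\rfloor$ when $a$ is an integer, which is what justifies pulling $\sum_i\lfloor r_i/2\rfloor$ outside the floor. After that, the claim reduces to the single observation that at most $q$ of the $\epsilon_i$ can be $1$, which matches the slack $q-1$ that appears in the numerator up to one unit.
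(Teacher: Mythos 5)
Your proof is correct, and it rests on the same elementary parity observation as the paper's: each $\floor{r_i/2}$ loses exactly the parity bit $\epsilon_i$, and the total loss $E\le q$ is absorbed by the $q-1$ slack in the numerator. The only cosmetic difference is that the paper argues via $\floor{x/2}\ge (x-1)/2$ with a small case split on whether some $r_i$ is even, whereas you avoid the case split by pulling the integer part out of the floor.
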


\begin{proof}
For any integer $x$, we have $\floor{\frac{x}{2}}\geq\frac{x-1}{2}$, and equal to $\frac{x}{2}$
if $x$ is even. So if any of the $r_i$'s are even, the statement follows. Otherwise,
$\sum_{i=1}^q r_i-q$ is even, and both the LHS and RHS of \eqref{eq:floor1} are equal
to $\frac{\sum_{i=1}^q r_i-q}{2}$. 
\end{proof}

\begin{proofof}{Lemma~\ref{lem:emptyx}}
Let $(X,A'_0,A'_1)$ be the tight triple given by Theorem~\ref{grpminmaxthm}.
First, we show how to modify $(X,A'_0,A'_1)$ to obtain a tight triple
$(X,\tA_0,\tA_1)$ that satisfies  
\begin{equation} \label{eq:uniform}
([y]-X) \subseteq \tA_0 \quad \text{or} \quad
([y]-X) \subseteq \tA_1 \quad \text{or} \quad
([y]-X) \subseteq V(H(X,\tA_0,\tA_1))-(\tA_0 \cup \tA_1)
\end{equation}
for every $y \in V(G)$. Then, we show how vertices in $X$ can be appropriately moved out
of $X$ to obtain the required tight triple $(\emptyset,B_0,B_1)$. 

\paragraph{Step 1: satisfying \eqref{eq:uniform}.} 
Let $A'_*$ denote $V(H(X,A'_0,A'_1)) - A'_0 - A'_1=V(H) - X - A'_0 - A'_1$. Suppose there
is a vertex $y\in V(G)$ that violates \eqref{eq:uniform}, so $([y]-X)\not\sse T$ for
$T\in\{A'_0,A'_1,A'_*\}$. Note that $y\neq s$ since $[s]\sse A'_0$ (as $(X,A'_0,A'_1)$ is
a valid triple).
Define $D_0 := A'_0-[y]$, $D_1 := A'_1-[y]$ and 
$D_* := V(H)-X-D_0-D_1$. Clearly, $(X,D_0,D_1)$ is a valid triple, and
$[y]-X\sse D_*$.
We show that $(X,D_0,D_1)$ is a tight triple; so by repeatedly applying the above
modification for every vertex that violates \eqref{eq:uniform}, we obtain a tight triple
$(X,\tA_0,\tA_1)$ satisfying \eqref{eq:uniform}.  

Let $H_1:=H(X,A'_0,A'_1)$, and $H_2:=H(X,D_0,D_1)$. 
Let $\gamma_1 = \gamma(A'_1)$, and $\gamma_2 = \gamma(D_1)$. 
To show that $(X,D_0,D_1)$ is tight, it suffices to show that 
$p(X,D_0,D_1) \leq p(X,A'_0,A'_1)$.  
This boils down to showing that  
\begin{equation} \label{eq:step1}
\sum_{K\in\comp(H_2)} \floor{ \frac{|V(K) \cap (D_0 \cup D_1) |}{2} } \leq 
\sum_{K\in\comp(H_1)} \floor{ \frac{|V(K) \cap (A'_0 \cup A'_1) |}{2} }.
\end{equation}
Observe that $H_1$ and $H_2$ have the same vertex-set $V(H)-X$, and $H_1$ is a subgraph of
$H_2$. 
Every component of $H_2$ is either also a component of $H_1$, 
or is obtained by merging some components of $H_1$. Moreover, there is at 
most one component of $H_2$ that corresponds to the latter case; call this component
$K_2$, and let $\K$ be the components of $H_1$ that are merged to form $K_2$. 
Clearly, every component of $H_2$ that is also a component of $H_1$ contributes the same
to both the LHS and RHS of \eqref{eq:step1}. So we focus on component $K_2$ and show that
the contribution of $K_2$ to the LHS of \eqref{eq:step1} is at most the total contribution
of components in $\K$ to the RHS of \eqref{eq:step1}.
 
Note that all vertices in $[y]-X$ belong to the same component of $H_1$: since $y$
violates \eqref{eq:uniform}, it is not hard to see that there is a spanning subgraph of
$[y]-X$ that is not contained in $E(A'_0 \cup A'_1,\gamma_1)$. 
Every edge $e=(w,z)\in E(H_2)$ that merges two components in $\K$ is therefore such
that $\{w,z\}\not\sse[y]$, and lies in $E(A'_0\cup A'_1,\gm_1)-E(D_0\cup D_1,\gm_2)$. 
So we have $\gm_{1,e}=0$, $w,z\in A'_0\cup A'_1$, and $\{w,z\}\not\sse D_0\cup D_1$. It
follows that $e$ has exactly one end in $([y]-X)\cap (A'_0\cup A'_1)$, and the other
end not in $[y]$. 
Therefore, $|\K|\leq 1+\bigl|([y]-X)\cap(A'_0\cup A'_1)\bigr|$. 
It follows that
\begin{equation*}
\begin{split}
|V(K_2) \cap (D_0 \cup D_1)| & = \sum_{K \in \K}\bigl|V(K) \cap (A'_0 \cup A'_1)\bigr| 
- \bigl|([y]-X) \cap (A'_0 \cup A'_1)\bigr| \\ 
& \leq \sum_{K \in \K}\bigl|V(K) \cap (A'_0 \cup A'_1)\bigr| - (|\K| - 1).
\end{split}
\end{equation*}
Combining with Claim~\ref{basic}, we obtain that
$$ 
\sum_{K \in \K}\floor{\frac{|V(K) \cap (A'_0 \cup A'_1)|}{2}}\geq
\floor{\frac{\sum_{k\in\K}|V(K)\cap (A'_0\cap A'_1)-(|\K|-1)}{2}}
\geq\floor{\frac{|V(K_2)\cap (D_0\cup D_1)|}{2}}.
$$
This completes the proof that $(X,D_0,D_1)$ is a tight triple.

\paragraph{Step 2: removing vertices in \boldmath $X$.} 
Recall that Step 1 yields a tight triple $(X,\tA_0,\tA_1)$ satisfying
\eqref{eq:uniform} for every vertex $y \in V(G)$. 
We now show how to convert $(X,\tA_0,\tA_1)$ to a tight triple $(\es,B_0,B_1)$ 
satisfying \eqref{eq:emptyx} for every $x\in V(G)$.
Let $\tA_*$ denote 
$V(H(X,\tA_0,\tA_1))-\tA_0-\tA_1=V(H)-X-\tA_0-\tA_1$. 
Let $y\in V(G)$ be such that $[y]\cap X\neq\es$. Choose some vertex $z\in[y]\cap X$. 
We know that \eqref{eq:uniform} holds for $y$. Let $Y:=X-\{z\}$. 
If $\tA_0\cap[y]\neq\es$, set $J_0=\tA_0\cup\{z\}$, else set
$J_0=\tA_0$. If $\tA_1\cap[y]\neq\es$, set $J_1=\tA_1\cup\{z\}$, else set 
$J_1=\tA_1$. Notice that 
for every vertex $x\in V(G)$, all vertices in $[x]-Y$ are either in $J_0$, or in
$J_1$, or in $J^*:=V(H)-Y-J_0-J_1$.
We show that $(Y,J_0,J_1)$ is a tight triple; so by repeating this process, we will 
eventually obtain a tight triple $(\es,B_0,B_1)$ satisfying \eqref{eq:emptyx}.

Let $H_1:=H(X,\tA_0,\tA_1)$ and $H_2:=H(Y,J_0,J_1)$, so  
$H_1$ is a subgraph of $H_2$, and $V(H_2)=V(H_1)\cup\{z\}$.
Again, it suffices to show that $p(Y,J_0,J_1)\leq p(X,\tA_0,\tA_1)$.
This amounts to showing that
\begin{equation} \label{eq:step2}
\sum_{K\in\comp(H_2)} \floor{ \frac{|V(K) \cap (J_0 \cup J_1) |}{2} } \leq 
1+\sum_{K\in\comp(H_1)} \floor{ \frac{|V(K) \cap (\tA_0 \cup \tA_1) |}{2} }.
\end{equation}
Let $\gamma_2:=\gamma(J_1)$. 
Let $w$ be the other end of the non-clique edge of $H$ incident to $z$.
Let $K_1$ be the component of $H_1$ containing $w$. 
Let $K_2$ be the component of $H_2$ containing $z$. 
Observe that since $(X,\tA_0,\tA_1)$ is a tight triple, the inequality in \eqref{eq:step2}
cannot be strict (otherwise, we would have $\nu(H,[s],\gm)>p(Y,J_0,J_1)$). This implies
that $z$ cannot be an isolated vertex of $H_2$.
 
If $z\in J_0\cup J_1$, then all edges of $H_2$ belonging to the clique $[y]$ lie in
$E(J_0\cup J_1,\gm_2)$. So we must have that $V(K_2)\supseteq\{w,z\}$ (as $z$ cannot be
an isolated vertex), 
and $V(K_2)=V(K_1)\cup\{z\}$. Clearly, $\comp(H_2)-\{K_2\}=\comp(H_1)-\{K_1\}$, and
all components in this set contribute equally to the LHS and RHS of \eqref{eq:step2}, so
\eqref{eq:step2} follows in this case by noting that 
$|V(K_2)\cap(J_0\cup J_1)|=1+|V(K_1)\cap(\tA_0\cup\tA_1)|$. 

If $z\in J_*$, then $w\in V(K_2)$, so $V(K_1)\sse V(K_2)$.
All vertices of $[y]-X$ belong to the same component of $H_1$; call this component
$K'_1$. (One can argue that $K_1\neq K'_1$, but we do not need this below.) 
Clearly, $V(K'_1)\sse V(K_2)$, so we have $V(K_2)=V(K_1)\cup V(K'_1)\cup\{z\}$.
We have $\comp(H_2)-\{K_2\}=\comp(H_1)-\{K_1,K'_1\}$, and all these components contribute
equally to the LHS and RHS of \eqref{eq:step2}.
Finally, 
\begin{equation*}
\begin{split}
\floor{\frac{|V(K_2)\cap (J_0\cup J_1)|}{2}} & =
\floor{\frac{|V(K_1)\cap(\tA_0\cup\tA_1)|}{2}+\frac{|V(K'_1)\cap(\tA_0\cup\tA_1)|}{2}} \\
& \leq
1+\floor{\frac{|V(K_1)\cap(\tA_0\cup\tA_1)|}{2}}+\floor{\frac{|V(K'_1)\cap(\tA_0\cup\tA_1)|}{2}}
\end{split}
\end{equation*}
where the inequality above follows from Claim~\ref{basic}.
Thus, \eqref{eq:step2} holds in this case as well.
\end{proofof}

\section{Extensions}

\paragraph{Odd trails in signed graphs.}
A {\em signed graph} is a tuple $\bigl(G=(V,E),\Sg\bigr)$, where $G$ is undirected and
$\Sg\sse E$. A set $F$ of edges is now called odd if $|F\cap\Sg|$ is odd. 
Our results extend to the more-general
setting of packing and covering odd $(u,v)$-trails in a signed graph. In particular,
Theorems~\ref{thm:2k}, \ref{thm:sstrails} and~\ref{thm:contacts} 
{\em hold without any changes}. Theorem~\ref{thm:sstrails} follows simply because it
utilizes Theorem~\ref{grpthm}, which applies to the even more-general setting of
group-labeled graphs. Theorem~\ref{thm:contacts} holds because it uses basic parity arguments:
if we simply replace parity with parity with respect to $\Sg$ (i.e., instead of parity of
$F$, we now consider parity of $|F\cap\Sg|$), then everything goes through. Finally, as
before, combining the above two results yields (the extension of) Theorem~\ref{thm:2k}.

\paragraph{Odd $(C,D)$-trails.}
This is the generalization of the odd $(u,v)$-trails setting, where we have disjoint
sets $C,D\sse V$. 
Our results yield a factor-2 gap 
between the the minimum number of edges needed to cover all odd $(C,D)$-trails 
and the maximum number of edge-disjoint odd $(C,D)$-trails.

We achieve this as follows. First, we prove a generalization of Theorem~\ref{thm:sstrails} 
showing that for any integer $k\geq 0$, we can either obtain $k$ edge-disjoint odd
$(C\cup D,C\cup D)$-trails, or an odd-$(C\cup D,C\cup D)$-trail cover of size at most
$2k-2$. This follows by again utilizing Theorem~\ref{grpthm}: we use the same gadget graph
$H$ and group $\Gm=\grp$ as in Section~\ref{sec:sstrails}, but now take $A$ to be all the
clique nodes corresponding to the nodes of $C$ and $D$. It is not hard to see that the
same translation from the group-labeled setting applies here, and yields the above
generalization of Theorem~\ref{thm:sstrails}. 
Next, we observe that Theorem~\ref{thm:contacts} can still be applied in this more-general
setting to show that if we have a collection $\hT$ of $k$ edge-disjoint odd 
$(C\cup D,C\cup D)$-trails, and (at least) $2k$ edge-disjoint $(C,D)$-paths, then we can
obtain $k$ edge-disjoint odd $(C,D)$-trails. 
A simple way of seeing this is that we may simply contract $C$ and $D$ to
form supernodes $u$ and $v$, and then utilize the earlier proof; the resulting graph may
have loops, but this can be avoided by reworking the proof of Theorem~\ref{thm:contacts}
to work directly with $\hT$.

\bibliographystyle{plain}

\appendix

\section{Tight example for Theorem~\ref{thm:2k}} \label{append-2k}
We prove that for the graph $G=(V,E)$ shown in Fig.~\ref{2gap}, we have $\nu(u,v)=k$, and 
$\tau(u,v)=2k+1$. 
For any $i=1,\ldots,k$, let $B_i$ denote the subgraph of $G$ induced by the vertices
$\{u,a_i,b_i,\ldots,g_i,h_i,v\}$. 

The size of a minimum $(u,v)$-cut in $G_k$ is $2k+1$, so $\tau(u,v)\leq 2k+1$. 
It is easy to see by inspection that any cover $Z$ of $G$ must contain at least $2$ edges
from each $B_i$, as otherwise $B_i-Z$ will have an odd $(u,v)$-path. Suppose, for a
contradiction, that $|Z|\leq 2k$. Then $G-Z$ contains a $(u,v)$-path $P$, and $Z$ contains
exactly $2$ edges from each $B_i$. We argue that $G-Z$ contains an odd $(u,v)$-trail,
yielding a contradiction and thereby showing that $\tau(u,v)=2k+1$.
$Z$ must contain exactly one edge from each of
the two edge-disjoint triangles $\{u,a_1,b_1\}$ and $\{v,g_1,h_1\}$ in $B_1$ containing
$u$ and $v$ respectively, as otherwise, one can combine $P$ and such a triangle to obtain
an odd $(u,v)$-trail in $G-Z$. But then $B_1-Z$ is connected, and contains all edges
incident to the nodes $c_1,d_1,e_1,f_1$, which implies that $B_1-Z$ contains an odd
$(u,v)$-path.   

\medskip

We now argue that $\nu(u,v)=k$. We have $\nu(u,v)\geq k$, as each $B_i$ contains an odd  
$(u,v)$-path. 
Among all collections of $\nu(u,v)$ edge-disjoint odd $(u,v)$-trails, let $\T$ be one with
the fewest number of edges. 
Let $P_1,\ldots,P_r$ be the odd $(u,v)$-paths in $\T$, and $T_1,\ldots,T_\ell$ be the odd 
$(u,v)$-trails in $\T$ that are not $(u,v)$-paths.
Observe that each $(u,v)$-path $P\in\{P_1,\ldots,P_r\}$ must be contained in some subgraph 
$B_i$, and $B_i-P$ is disconnected. Next, observe that every odd $(u,v)$-trail 
$T\in\{T_1,\ldots,T_\ell\}$ consists of an even $(u,v)$-path concatenated with an odd
$(u,u)$-trail or an odd $(v,v)$-trail, say $Z$. Moreover, $Z$ must in fact be a triangle
containing $u$ or $v$ due to the minimality of $\T$. (Suppose $Z$ is an odd
$(u,u)$-trail. Then $Z$ must contain an odd cycle contained in some subgraph $B_i$ and the
edges in $B_i$ incident to $u$; replacing $Z$ in the trail $T$ with the triangle in $B_i$
incident to $u$ yields an odd $(u,v)$-trail $T'$ with $|E(T')|<|E(T)|$ that is
edge-disjoint from $\T-\{T\}$, contradicting the minimality of $\T$. The case where $Z$ is
an odd $(v,v)$-trail is completely analogous.)  
Thus, every trail in $\{T_1,\ldots,T_\ell\}$ consists of an even $(u,v)$-path and a
triangle incident to $u$ or $v$. Without loss of generality, let $\ceil{\frac{\ell}{2}}$
of these trails contain triangles incident to $u$. 
Let $\Zc_u$ be the collection of triangles contained in the trails $\{T_1,\ldots,T_\ell\}$
that are incident to $u$, so $|\Zc_u|\geq\ceil{\frac{\ell}{2}}$.   

Consider the graph $G'=G-\bigcup_{i=1}^rP_i-\bigcup_{Z\in\Zc_u}Z$. The $(u,v)$ connectivity
in $G'$ is at most $2k+1-2r-2|\Zc_u|$: deleting each $P_i$ leaves some $B_{i'}$
disconnected, and so deleting either a $P_i$ or a triangle in $\Zc_u$ decreases the
$(u,v)$-connectivity by 2. Also, $G'$ contains $\ell$ (even) $(u,v)$-paths. So we have
$\ell\leq 2k+1-2r-2\ceil{\frac{\ell}{2}}$, which implies that $\ell+r\leq\frac{2k+1}{2}$,
and hence that $\ell+r\leq k$. Thus, $\nu(u,v)=k$.

\section{Tight example for Theorem~\ref{thm:contacts}} \label{append-contacts}
We now provide an example showing that Theorem~\ref{thm:contacts} is tight; in particular,
the requirement that $\ld(u,v)\geq 2|\hT|$ cannot be weakened.
Consider the graph $G$ shown in Figure~\ref{fig:2kconn}. 
$\T = \{ u x_i y_i v \}_{i=1}^{k-1} \cup \{ u z_1 z_2 u \} \cup \{ v z_5 z_6 v \}$ is a
collection of $k+1$ edge-disjoint odd $(\uv,\uv)$-trails in $G$.
Observe that $\lambda(u,v) = 2k + 1=2|\T|-1$.
But $G$ contains only $k$ edge-disjoint odd $(u,v)$-trails: the subgraph induced by $\uv$, 
the $z_i$s and $w$ contains only one odd $(u,v)$-trail.

\begin{figure}[h!]
\centering
\begin{tikzpicture}[scale=0.7]

\setcounter{xx}{0}
\setcounter{yy}{0}
\coordinate (centre) at (\value{xx},\value{yy});
\node (dummy) at (centre) {};

\coordinate (u) at ([xshift=-3cm]centre);
\coordinate (v) at ([xshift=3cm]centre);
\coordinate (xa) at ([yshift=5cm]centre);
\coordinate (ya) at ([yshift=-1.5cm]xa);
\coordinate (xb) at ([yshift=-1.5cm]ya);
\coordinate (yb) at ([yshift=-1.5cm]xb);
\coordinate (za) at ([xshift=1.5cm,yshift=-2.5cm]u);
\coordinate (zb) at ([yshift=-1.5cm]za);
\coordinate (zc) at ([xshift=1.5cm]za);
\coordinate (zd) at ([xshift=1.5cm]zb);
\coordinate (ze) at ([xshift=1.5cm]zc);
\coordinate (zf) at ([xshift=1.5cm]zd);
\coordinate (w) at ([yshift=-1.5cm]yb);

\draw [fill=black] (u) circle (\rad) node[left] {$u$};
\draw [fill=black] (v) circle (\rad) node[right] {$v$};
\draw [fill=black] (xa) circle (\rad) node[above] {$x_1$};
\draw [fill=black] (ya) circle (\rad) node[right] {$y_1$};
\draw [fill=black] (xb) circle (\rad) node[above] {$x_{k-1}$};
\draw [fill=black] (yb) circle (\rad) node[below] {$y_{k-1}$};
\draw [fill=black] (za) circle (\rad) node[above right] {$z_1$};
\draw [fill=black] (zb) circle (\rad) node[below] {$z_2$};
\draw [fill=black] (zc) circle (\rad) node[above] {$z_3$};
\draw [fill=black] (zd) circle (\rad) node[below] {$z_4$};
\draw [fill=black] (ze) circle (\rad) node[above left] {$z_5$};
\draw [fill=black] (zf) circle (\rad) node[below] {$z_6$};
\draw [fill=black] (w) circle (\rad) node[below] {$w$};

\draw (u) -- (za);
\draw (u) -- (zb);

\draw (v) -- (ze);
\draw (v) -- (zf);

\draw (za) -- (zc);
\draw (zb) -- (zd);

\draw (zc) -- (ze);
\draw (zd) -- (zf);

\draw (u) -- (w);
\draw (v) -- (w);

\draw (u) -- (xa);
\draw (u) -- (xb);
\draw (u) -- (ya);
\draw (u) -- (yb);

\draw (v) -- (xa);
\draw (v) -- (xb);
\draw (v) -- (ya);
\draw (v) -- (yb);

\draw (xa) -- (ya);
\draw (xb) -- (yb);
\draw (za) -- (zb);
\draw (ze) -- (zf);

\node[font=\bf] at ([yshift=-0.75cm]ya) {\vdots};

\end{tikzpicture}

\caption{Example showing that Theorem~\ref{thm:contacts} is tight.} \label{fig:2kconn} 
\end{figure}
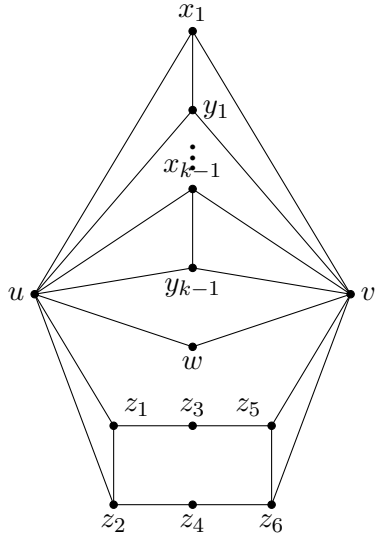

\section{Lower bound for covers constructed in~\cite{ChurchleyMW16}} \label{append-lbound}
Churchley et al.~\cite{ChurchleyMW16} construct odd-$(u,v)$-trail covers of the form 
$\dt(X)$, where $X$ is a $u$-$v$ cut, or of the form
$\dt(X)\cup(E(X)\sm F)$, where $\uv\sse X$, and $(X,E(X)\cap F)$ is bipartite with $u$ and $v$
on the same side of the bipartition. The following is a family $\{H_k\}_{k\geq 1}$ of
graphs where, for $H_k$ we have $\nu(u,v)=k$, $\tau(u,v)=2k$, but any cover of the above
form has size at least $3k$. This shows that covers of the above form cannot yield a bound
better than $3$ on the covering-vs-packing ratio for odd $(u,v)$ trails. (The graph $H_k$
has parallel edges; if desired, the parallel edges can be eliminated by replacing each
parallel edge with a length-$3$ path (to preserve trail parities).)

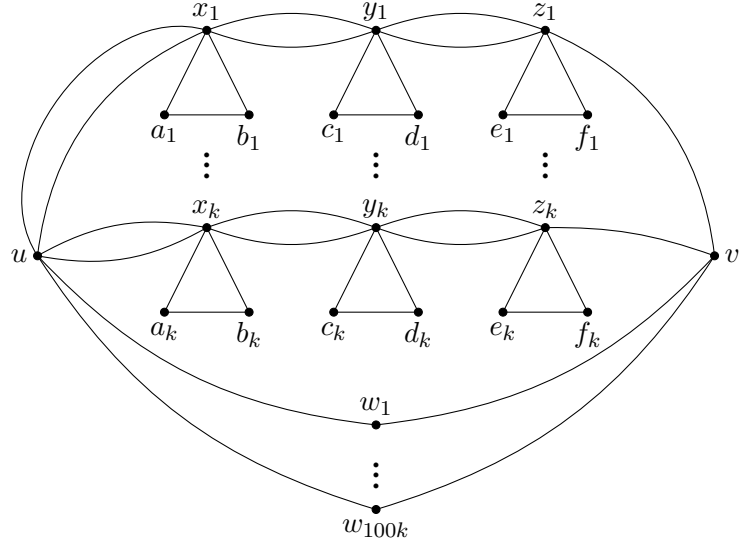
\begin{figure}[h!]
\centering
\begin{tikzpicture}[scale=0.75]

\setcounter{xx}{0}
\setcounter{yy}{0}
\coordinate (centre) at (\value{xx},\value{yy});
\node (dummy) at (centre) {};

\coordinate (u) at ([xshift=-6cm]centre);
\coordinate (v) at ([xshift=6cm]centre);

\coordinate (xa) at ([xshift=3cm,yshift=4cm]u);
\coordinate (ya) at ([xshift=3cm]xa);
\coordinate (za) at ([xshift=3cm]ya);

\coordinate (aa) at ([xshift=-0.75cm,yshift=-1.5cm]xa);
\coordinate (ba) at ([xshift=0.75cm,yshift=-1.5cm]xa);
\coordinate (ca) at ([xshift=-0.75cm,yshift=-1.5cm]ya);
\coordinate (da) at ([xshift=0.75cm,yshift=-1.5cm]ya);
\coordinate (ea) at ([xshift=-0.75cm,yshift=-1.5cm]za);
\coordinate (fa) at ([xshift=0.75cm,yshift=-1.5cm]za);

\draw [fill=black] (u) circle (\rad) node[left] {$u$};
\draw [fill=black] (v) circle (\rad) node[right] {$v$};
\draw [fill=black] (xa) circle (\rad) node[above] {$x_1$};
\draw [fill=black] (ya) circle (\rad) node[above] {$y_1$};
\draw [fill=black] (za) circle (\rad) node[above] {$z_1$};

\draw [fill=black] (aa) circle (\rad) node[below] {$a_1$};
\draw [fill=black] (ba) circle (\rad) node[below] {$b_1$};
\draw [fill=black] (ca) circle (\rad) node[below] {$c_1$};
\draw [fill=black] (da) circle (\rad) node[below] {$d_1$};
\draw [fill=black] (ea) circle (\rad) node[below] {$e_1$};
\draw [fill=black] (fa) circle (\rad) node[below] {$f_1$};

\draw (u) edge[bend left=70] (xa);
\draw (u) edge[bend left=30] (xa);

\draw (xa) edge[bend left=20] (ya);
\draw (xa) edge[bend right=20] (ya);

\draw (ya) edge[bend left=20] (za);
\draw (ya) edge[bend right=20] (za);

\draw (za) edge[bend left=30] (v);

\draw (xa) -- (aa);
\draw (xa) -- (ba);
\draw (ya) -- (ca);
\draw (ya) -- (da);
\draw (za) -- (ea);
\draw (za) -- (fa);
\draw (aa) -- (ba);
\draw (ca) -- (da);
\draw (ea) -- (fa);

\node[font=\bf] at ([yshift=-2.25cm]xa) {\vdots};
\node[font=\bf] at ([yshift=-2.25cm]ya) {\vdots};
\node[font=\bf] at ([yshift=-2.25cm]za) {\vdots};

\coordinate (xb) at ([yshift=-3.5cm]xa);
\coordinate (yb) at ([yshift=-3.5cm]ya);
\coordinate (zb) at ([yshift=-3.5cm]za);

\coordinate (ab) at ([xshift=-0.75cm,yshift=-1.5cm]xb);
\coordinate (bb) at ([xshift=0.75cm,yshift=-1.5cm]xb);
\coordinate (cb) at ([xshift=-0.75cm,yshift=-1.5cm]yb);
\coordinate (db) at ([xshift=0.75cm,yshift=-1.5cm]yb);
\coordinate (eb) at ([xshift=-0.75cm,yshift=-1.5cm]zb);
\coordinate (fb) at ([xshift=0.75cm,yshift=-1.5cm]zb);

\draw [fill=black] (xb) circle (\rad) node[above] {$x_k$};
\draw [fill=black] (yb) circle (\rad) node[above] {$y_k$};
\draw [fill=black] (zb) circle (\rad) node[above] {$z_k$};

\draw [fill=black] (ab) circle (\rad) node[below] {$a_k$};
\draw [fill=black] (bb) circle (\rad) node[below] {$b_k$};
\draw [fill=black] (cb) circle (\rad) node[below] {$c_k$};
\draw [fill=black] (db) circle (\rad) node[below] {$d_k$};
\draw [fill=black] (eb) circle (\rad) node[below] {$e_k$};
\draw [fill=black] (fb) circle (\rad) node[below] {$f_k$};

\draw (u) edge[bend left=20] (xb);
\draw (u) edge[bend right=20] (xb);

\draw (xb) edge[bend left=20] (yb);
\draw (xb) edge[bend right=20] (yb);

\draw (yb) edge[bend left=20] (zb);
\draw (yb) edge[bend right=20] (zb);

\draw (zb) edge[bend left=10] (v);

\draw (xb) -- (ab);
\draw (xb) -- (bb);
\draw (yb) -- (cb);
\draw (yb) -- (db);
\draw (zb) -- (eb);
\draw (zb) -- (fb);
\draw (ab) -- (bb);
\draw (cb) -- (db);
\draw (eb) -- (fb);

\coordinate (wa) at ([yshift=-3cm]centre);
\coordinate (wb) at ([yshift=-1.5cm]wa);
\draw [fill=black] (wa) circle (\rad) node[above] {$w_1$};
\draw [fill=black] (wb) circle (\rad) node[below] {$w_{100k}$};

\draw (u) edge[bend right=20] (wa);
\draw (u) edge[bend right=20] (wb);
\draw (v) edge[bend left=20] (wa);
\draw (v) edge[bend left=20] (wb);

\node[font=\bf] at ([yshift=-0.75cm]wa) {\vdots};

\end{tikzpicture}

\caption{Graph $H_k$ with $\nu(u,v)=k$, $\tau(u,v)=2k$.
We have $|\delta(X)| \cup |E(X) \sm F| \geq 3k$ for any $X, F$ such that $\uv\sse X$
and $(X,E(X)\cap F)$ is bipartite with $u,v$ on the same side.} 
\label{fig:3k} 
\end{figure}

Let $C_i$ denote the vertex-set $\{u,v,x_i,y_i,z_i,a_i,b_i,c_i,d_i,e_i,f_i\}$. 
Observe that every odd trail $T$ in $G$ uses an edge from at least one of the odd cycles  
in $\{x_i a_i b_i x_i \}_{i=1}^{k} \cup \{y_i c_i d_i y_i \}_{i=1}^{k} \cup \{z_i e_i f_i z_i
\}_{i=1}^{k}$. 
For each $1 \leq i \leq k$, removing the two parallel edges between $u$ and $x_i$
along with the edge $v z_i$ separates $C_i\sm\{u,v\}$ from $\{u,v\}$, so at most one trail 
can use the nodes of $C_i\sm\{u,v\}$.
It follows that the packing number is at most $k$. 
It is trivial to find $k$ edge-disjoint odd $(u,v)$-trails in $G$ and hence
$\nu(u,v)=k$. Next, the covering number is at most $2k$ since the edge-set consisting of
all the edges (including parallel copies) between $u$ and $x_i$ for all $1 \leq i \leq k$
is an odd $(u,v)$-trail cover. Deleting at most one edge from the induced
subgraph $G[C_i]$ does not destroy all the odd $(u,v)$-trails using at least one odd cycle
from $G[C_i]$ hence we need at least $2k$ edges in any cover. Thus, $\tau(u,v)=2k$.  

Since $\ld(u,v)\geq 100k$, every $u$-$v$ cut $X$ has $|\dt(X)|\geq 100k$.
Finally, we verify that $|\delta(X)| \cup |E(X) \sm F| \geq 3k$ for any $X$ ,$F$
such that $\uv\sse X$ and $(X,E(X)\cap F)$ is bipartite with $u,v$ on the same side. For
this it is convenient to consider the graph shown in Fig.~\ref{fig:3ks}, which is the
graph in Fig.~\ref{fig:3k}, with $u$ and $v$ identified to form $s$.

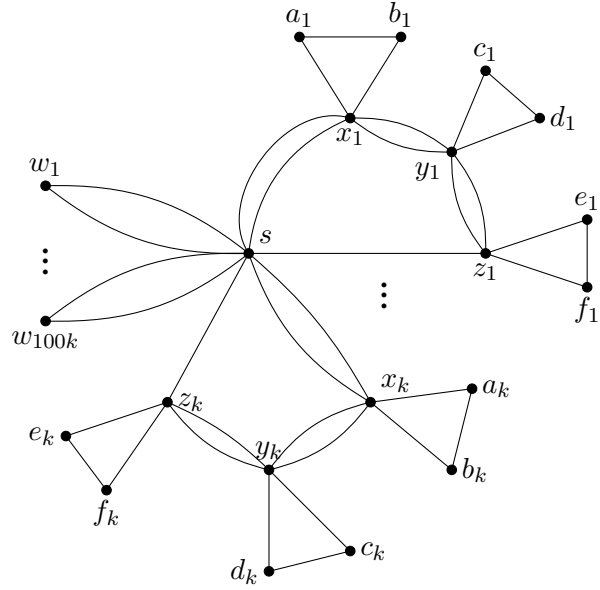
\begin{figure}[h!]
\centering
\begin{tikzpicture}[scale=0.9]

\setcounter{xx}{0}
\setcounter{yy}{0}
\coordinate (centre) at (\value{xx},\value{yy});
\node (dummy) at (centre) {};

\coordinate (s) at (centre);

\coordinate (xa) at ([xshift=1.5cm,yshift=2cm]s);
\coordinate (ya) at ([xshift=1.5cm,yshift=-0.5cm]xa);
\coordinate (za) at ([xshift=0.5cm,yshift=-1.5cm]ya);

\coordinate (aa) at ([xshift=-0.75cm,yshift=1.2cm]xa);
\coordinate (ba) at ([xshift=0.75cm,yshift=1.2cm]xa);
\coordinate (ca) at ([xshift=0.5cm,yshift=1.2cm]ya);
\coordinate (da) at ([xshift=1.3cm,yshift=0.5cm]ya);
\coordinate (ea) at ([xshift=1.5cm,yshift=0.5cm]za);
\coordinate (fa) at ([xshift=1.5cm,yshift=-0.5cm]za);

\draw [fill=black] (s) circle (\rad) node[above right] {$s$};
\draw [fill=black] (xa) circle (\rad) node[below] {$x_1$};
\draw [fill=black] (ya) circle (\rad) node[below left] {$y_1$};
\draw [fill=black] (za) circle (\rad) node[below] {$z_1$};

\draw [fill=black] (aa) circle (\rad) node[above] {$a_1$};
\draw [fill=black] (ba) circle (\rad) node[above] {$b_1$};
\draw [fill=black] (ca) circle (\rad) node[above] {$c_1$};
\draw [fill=black] (da) circle (\rad) node[right] {$d_1$};
\draw [fill=black] (ea) circle (\rad) node[above] {$e_1$};
\draw [fill=black] (fa) circle (\rad) node[below] {$f_1$};

\draw (s) edge[bend left=70] (xa);
\draw (s) edge[bend left=30] (xa);

\draw (xa) edge[bend left=20] (ya);
\draw (xa) edge[bend right=20] (ya);

\draw (ya) edge[bend left=20] (za);
\draw (ya) edge[bend right=20] (za);

\draw (s) edge[bend left=0] (za);

\draw (aa) -- (ba);

\draw (xa) -- (aa);
\draw (xa) -- (ba);
\draw (ya) -- (ca);
\draw (ya) -- (da);
\draw (za) -- (ea);
\draw (za) -- (fa);

\draw (ca) -- (da);
\draw (ea) -- (fa);

\node[font=\bf] at ([xshift=0.5cm,yshift=-2.5cm]xa) {\vdots};

\coordinate (xb) at ([xshift=1.8cm,yshift=-2.2cm]s);
\coordinate (yb) at ([xshift=-1.5cm,yshift=-1cm]xb);
\coordinate (zb) at ([xshift=-1.5cm,yshift=1cm]yb);
\coordinate (ab) at ([xshift=1.5cm,yshift=0.2cm]xb);
\coordinate (bb) at ([xshift=1.2cm,yshift=-1cm]xb);
\coordinate (cb) at ([xshift=1.2cm,yshift=-1.2cm]yb);
\coordinate (db) at ([xshift=0cm,yshift=-1.5cm]yb);
\coordinate (eb) at ([xshift=-1.5cm,yshift=-0.5cm]zb);
\coordinate (fb) at ([xshift=-0.9cm,yshift=-1.3cm]zb);
\draw [fill=black] (xb) circle (\rad) node[above right] {$x_k$};
\draw [fill=black] (yb) circle (\rad) node[above] {$y_k$};
\draw [fill=black] (zb) circle (\rad) node[right] {$z_k$};

\draw [fill=black] (ab) circle (\rad) node[right] {$a_k$};
\draw [fill=black] (bb) circle (\rad) node[right] {$b_k$};
\draw [fill=black] (cb) circle (\rad) node[right] {$c_k$};
\draw [fill=black] (db) circle (\rad) node[left] {$d_k$};
\draw [fill=black] (eb) circle (\rad) node[left] {$e_k$};
\draw [fill=black] (fb) circle (\rad) node[below] {$f_k$};

\draw (s) edge[bend left=10] (xb);
\draw (s) edge[bend right=20] (xb);

\draw (xb) edge[bend left=20] (yb);
\draw (xb) edge[bend right=20] (yb);

\draw (yb) edge[bend left=20] (zb);
\draw (yb) edge[bend right=10] (zb);

\draw (zb) edge[bend left=0] (s);

\draw (xb) -- (ab);
\draw (xb) -- (bb);
\draw (ab) -- (bb);

\draw (yb) -- (cb);
\draw (yb) -- (db);
\draw (cb) -- (db);

\draw (zb) -- (eb);
\draw (zb) -- (fb);
\draw (eb) -- (fb);

\coordinate (wa) at ([xshift=-3cm,yshift=1cm]s);
\coordinate (wb) at ([xshift=-3cm,yshift=-1cm]s);
\draw [fill=black] (wa) circle (\rad) node[above] {$w_1$};
\draw [fill=black] (wb) circle (\rad) node[below] {$w_{100k}$};

\draw (s) edge[bend right=20] (wa);
\draw (s) edge[bend right=20] (wb);
\draw (s) edge[bend left=20] (wa);
\draw (s) edge[bend left=20] (wb);

\node[font=\bf] at ([yshift=-1cm]wa) {\vdots};

\end{tikzpicture}

\caption{Graph from Figure~\ref{fig:3k} with vertices $u$ and $v$ identified to form $s$.} \label{fig:3ks}
\end{figure}

Let $C'_i=\{s,x_i,y_i,z_i,a_i,b_i,c_i,d_i,e_i,f_i\}$. 
For any node-set $X$ with $s\in X$, we have 
$|\dt(X)|\geq\sum_{i=1}^k|E(C'_i\cap X,C'_i\sm X)|$. 
For $F$ such that $(X,E(X)\cap F)$ is bipartite, we lower bound $|E(X)\sm F|$ by counting
the number of triangles (i.e., $\{x_i,a_i,b_i\}$, $\{y_i,c_i,d_i\}$, or $\{z_i,e_i,f_i\}$
for some $i$) in $E(X)$. This contribution is decoupled across the $C'_i$s, so overall each
$C'_i$ contributes $|E(C'_i\cap X,C'_i\sm X)|+\text{(number of triangles in $C'_i\cap X$)}$. 
We argue that this contribution is at least $3$ for each $C'_i$.
Each $C'_i$ is 2-edge-connected, and the cuts of size 2 in $C'_i$ are those that contain
two edges of one of the $C'_i$-triangles. So if $|E(C'_i\cap X, C'_i\sm X)|\leq 2$, then
$C'_i\cap X$ contains at least one triangle, and the contribution is at least $3$.

\section{Tight example for Theorem~\ref{thm:sstrails}} \label{append-sstexmpl}
The example is the same as the graph $G$ in Fig.~\ref{fig:3ks}, which, recall is
obtained from the graph $G'$ in Fig.~\ref{fig:3k} by identifying $u$ and $v$. Since
$\tau(u,v;G')=2k$, we have $\tau(s,s;G)\geq 2k$. We argue that $\nu(s,s,G)\leq k$, showing
that the two conclusions of the theorem are tight as seen by the inputs $k$ and $k+1$.

The upper bound on $\nu(s,s,G)$ follows from the same reasoning as before.
Recall that we have $C'_i=\{s,x_i,y_i,z_i,a_i,b_i,c_i,d_i,e_i,f_i\}$.
Every odd trail $T$ in $G$ must use an edge in at least one of the triangles. For each 
$1 \leq i \leq k$, the two parallel edges between $s$ and $x_i$ and $sz_i$ separate the
triangle vertices of $C'_i$ from $s$, so each $C'_i$ contributes at most 1 to
$\nu(s,s,G)$. Therefore, $\nu(s,s,G)\leq k$.

\end{document}